\title{Lexicographic PLS-complete problems: Max-$k$-SAT,  Abelian Permutation Orbit Minimization, and 
approximate pure Nash equilibria} 
\titlerunning{Lexicographic PLS-complete problems} 
\author{Dominik Scheder}{TU Chemnitz, Germany}{dominik.scheder@informatik.tu-chemnitz.de}{0000-0002-9360-7957}{}
\author{Johannes Tantow}{TU Chemnitz, Germany}{johannes.tantow@informatik.tu-chemnitz.de}{0009-0006-0408-6966}{}
\authorrunning{D. Scheder and J. Tantow} 
\keywords{PLS, local search, total search problems, permutations, congestion games} 
\newclass{\CLS}{CLS}
\newlang{\flip}{\textup{\textsc{Flip}}}
\newlang{\DCR}{DCR}
\newlang{\sat}{SAT}
\newcommand{\Z}{\mathbb{Z}}
\newcommand{\N}{\mathbb{N}}
\newcommand{\reals}{\mathbb{R}}
\newcommand{\NAND}{\textnormal{NAND}}
\newcommand{\nth}[1]{#1^{\textsuperscript{th}}}
\newcommand{\localmin}{\textsc{Polm}} 
\newcommand{\group}[1]{\langle #1 \rangle}
\newcommand{\successors}{\textnormal{Succs}}
\newcommand{\Value}{\textnormal{Value}}
\newcommand{\startsolution}{\textnormal{StartSolution}}
\newcommand{\feasible}{\textnormal{Feasible}}
\newcommand{\localmaxcut}{\textup{\textsc{LocalMaxCut}}}
\newcommand{\localmaxksat}[1]{\textup{\textsc{LocalMax}-#1-\textsc{SAT}}}
\newcommand{\lexmaxksat}[1]{\textup{\textsc{LocalLexMax}-#1-\textsc{SAT}}}
\newcommand{\polm}{\textup{\textsc{Polm}}} 
\newtheorem{searchproblem}[theorem]{Search Problem}
\numberwithin{equation}{section}
    \let\Cref\crtCref
    \let\cref\crtcref
\begin{document}

\maketitle

\begin{abstract}
How hard is it to find a local optimum? If we are given a graph and want 
to find a locally maximal cut--meaning that the number of edges in the cut cannot be improved by
moving a single vertex from one side to the other--then just iterating improving steps finds a local maximum since the size 
of the cut can increase at most $|E|$ times. If, on the other hand, the edges are weighted, this 
problem becomes hard for the class PLS (Polynomial Local Search)~\cite{DBLP:journals/siamcomp/SchafferY91}.

We are interested in optimization problems with {\em lexicographic costs}. For Max-Cut this 
would mean that the edges $e_1,\dots, e_m$ have costs $c(e_i) = 2^i$. For such a cost 
function, it is easy to see that finding a {\em global} Max-Cut is easy. In contrast, we show that 
it is PLS-complete to find an assignment for a 4-CNF formula that is locally maximal (when the clauses 
have lexicographic weights); and also for a 3-CNF when we relax the notion of ``local'' by allowing 
to switch two variables at a time. 

This result is interesting for two different reasons. Firstly, it offers another starting point for reductions to problems that are lexicographic by definition.
We use these results to answer a question in Scheder and Tantow~\cite{SchederTantow2025}, who showed that 
finding a lexicographic local minimum of a string $s \in \{0,1\}^n$ 
under the action of a list of given permutations $\pi_1, \dots, \pi_k \in S_{n}$ is PLS-complete. They ask whether the 
problem stays PLS-complete when the $\pi_1,\dots,\pi_k$ commute, i.e., generate an Abelian subgroup $G$
of $S_n$.  We show that it does, and in fact stays PLS-complete even (1) when every element in $G$ has order two and also (2) 
when $G$ is cyclic, i.e., all $\pi_1,\dots,\pi_k$ are powers of a single permutation $\pi$.

Secondly, we use it to further investigate the complexity of computing pure $\alpha$-Nash equilibria in congestion games. Using lexicographic 4-SAT, we obtain a simple proof of the PLS-completeness originally shown by Skopalik and Vöcking\cite{DBLP:conf/stoc/SkopalikV08}. The simple structure allows us to also show hardness for exponential and polynomial delay functions with positive coefficients. Additionally, every resource is used by a constant amount of players and every player has a constant number of strategies. However, the degree of the polynomials is not bounded by a constant but grows linearly with the number of clauses and logarithmically with the approximation ratio $\alpha$.
\end{abstract}

\newpage

\section{Introduction}

    Polynomial Local Search (\PLS) is the complexity class of problems that require us to find a local optimum.     
    Formally, a \PLS{} problem is defined by a set $\mathcal{I}$ of instances, 
    a set $\mathcal{S}$ of solutions, a function $\feasible: \mathcal{I} \times \mathcal{S} \rightarrow \{\texttt{true},\texttt{false}\}$
    telling us whether $s$ is at all a feasible solution to instance $x$,
    and a function $\startsolution: \mathcal{I} \rightarrow \mathcal{S}$ where $\startsolution(x)$ is always a 
    feasible solution for $s$; further, 
    functions 
    $\Value: \mathcal{I} \times \mathcal{S} \rightarrow \N$, where $\Value(x,v)$ computes how good the solution $v$ is for 
    the instance $x$; and a {\em successor function} $\successors: \mathcal{I} \times \mathcal{S} \rightarrow \mathcal{S}^*$,
    where $\successors(x,v) = (w_1,\dots,w_k)$ means that $w_1,\dots, w_k$ are the neighboring solutions to solution $v$. 
    We require that all feasible solutions to $x$ have the same length, and that all functions are 
    polynomially computable. In particular, this means that each solution $v$ has at most polynomially many successors
    $w_i$. A feasible solution $v$ is {\em locally optimal} if 
    none of its successors $w_i$ is {\em better}, i.e., $\Value(x,v) \geq \max_i \Value(x,w_i)$ (for maximization problems) 
    or $\Value(x,v) \leq \min_i \Value(x,w_i)$ (for minimization problems). The goal is to find a locally optimal feasible 
    solution.
    
    The study of this class was initiated by Johnson, Papadimitriou and Yannakakis in \cite{DBLP:journals/jcss/JohnsonPY88}. In their work they showed that 
    \flip{} is \PLS{}-complete:
    
    \begin{searchproblem}[\flip{}]
    Given a circuit $C$ with $n$ input and $m$ output bits; find an input $x \in \{0,1\}^n$ 
    such that $C(x) \succeq_{\rm lex} C(x \oplus \mathbf{e}_i)$ for all $i = 1,\dots, n$. Here, $\succeq_{\rm lex}$ is 
    the lexicographical order on $\{0,1\}^m$, i.e., we interpret the outputs as binary numbers
    and compare them accordingly.
    \end{searchproblem}

    The instances of \flip{} are circuits; feasible solutions are bit vectors of the appropriate length; the function $\Value$ is given by the output of the circuit and the neighborhood of a bit vector are all bit vectors with a Hamming distance of 1.
    
    In the realm of search problems, the notion of a {\em reduction from $A$ to $B$} 
    requires a function $f$ mapping $A$-instances to $B$-instances and a function $g$ mapping $B$-solutions to 
    $A$-solutions. Several search problems are known to be \PLS{}-complete under 
    these reductions:
        
    \begin{enumerate}
    \item \localmaxksat{$k$} for $k \geq 2$. Given a $k$-CNF formula $F = C_0 \wedge \dots \wedge C_{m-1}$ with $n$ variables and 
    clause weights $w_1, \dots, w_m$, 
    find an assignment $\alpha$ such that the weight of satisfied clauses cannot be improved by flipping a single variable (Krentel~\cite{DBLP:conf/focs/Krentel89}, Schäffer and Yannakakis~\cite{DBLP:journals/siamcomp/SchafferY91}).
    \item \localmaxcut{}. Given a graph $G =(V,E)$ and edge weights $w: E \rightarrow \reals$, 
    find a cut $V = S \uplus T$
    such that its weight (the sum of weights of all edges crossing the cut) cannot be improved by moving a vertex from 
    one side to the other 
    (Schäffer and Yannakakis~\cite{DBLP:journals/siamcomp/SchafferY91}).

    \item \textsc{Permutation Orbit Local Minimum} ($\polm$): Given a string $s \in \{0,1\}^n$ and 
    permutations $\pi_1,\dots, \pi_k \in S_n$, find $\pi \in \group{\pi_1,\dots,\pi_k}$ such that 
    $s \circ \pi$ is lexicographically locally minimal, i.e, $s \circ \pi \preceq_{\rm lex} s \circ \pi \circ \pi_i$ for all $1 \leq i \leq k$
    (Scheder and Tantow~\cite{SchederTantow2025}).
    \item Finding pure Nash equilibria in congestion games, described in more detail in \cref{section-congestion-games} below (Fabrikant, Papadimitriou and Talwar~\cite{DBLP:conf/stoc/FabrikantPT04}, Ackermann, Röglin and Vöcking\cite{DBLP:conf/focs/AckermannRV06}).
     \end{enumerate}

    The first two are obviously in \PLS. \polm{} is because membership $\pi \in \group{\pi_1,\dots,\pi_k}$ can be decided efficiently (Furst, Hopcroft and Luks~\cite{DBLP:conf/focs/FurstHL80}). For congestion games, the existence of pure Nash equilibria can be shown using 
    an appropriate potential function that decreases whenever a player decreases their cost by changing their strategy~(Rosenthal~\cite{rosenthal1973class}), which then also shows membership in \PLS.

    As described by Giannakopoulos, Grosz, and Melissourgos~\cite{DBLP:conf/icalp/GiannakopoulosG24}, the function $\Value{}$ can be described as a function mapping a solution $s$ to some solution vector $s'$ and the result is then $c \cdot s'$
    for some cost vector $c$. In \localmaxksat{$k$}, $s'$ in position $i$ has a $1$ if the clause $C_i$ is satisfied by $s$ and $0$ else. The cost vector $c$ contains then the weight each clause.
    In both \flip{} and \polm{}, the value of a solution is just a bit string, interpreted as a binary number. Therefore, we have that $s'$ is the bit string and $c_i = 2^i$. We say they use 
    a {\em lexicographic cost function}. This motivates the following problem:

    \begin{searchproblem}[\lexmaxksat{$k$}]
     \lexmaxksat{$k$} is \localmaxksat{$k$} where the clause $C_i$ has a weight of $2^i$. 
     Equivalently, we can view the problem as a FLIP instance with $n$ input bits and $m$ output bits where every output bit is computed by a single $k$-clause.
    \end{searchproblem}     
    
    We are now ready to state our first main result:

    \begin{theorem}
    \label{theorem-lex-4-sat-pls-complete}
    \lexmaxksat{$4$} is \PLS-complete, and so is \flip{} with depth-1 circuits. 
    \end{theorem}

    We will prove Theorem~\ref{theorem-lex-4-sat-pls-complete} in Section~\ref{section-lex-4-sat}. Note that 
    \lexmaxksat{$2$} can be solved polynomially; in fact, we can even compute a global maximum by the following 
    greedy algorithm: 
    
    \begin{enumerate}
        \item Set $G_m := 1$, the empty CNF, always true.
        \item For $i = m-1, \dots, 0$: set $G_i := G_{i+1} \wedge C_i$ if this is satisfiable; otherwise, set $G_i := G_{i+1}$.
    \end{enumerate}
    This algorithm works for all $k$-CNF formulas as long as the clause weights are lexicographic; it is efficient as long 
    as we can check efficiently whether $G_{i+1} \wedge C_i$ is satisfiable. Similarly, \localmaxcut{} with lexicographic costs is in \P{}.\\

    One of the main ways to show the \PLS{}-hardness of a problem $L$ 
    is a reduction from FLIP using Krentel's technique\cite{DBLP:conf/coco/Krentel89, DBLP:conf/focs/Krentel89} of simulating multiple circuits,  
    one on the current input and some on (a subset of) 
    its neighborhood and using a gadget to switch to the best neighbor. 
    One must then use the machinery of $L$ to simulate how the circuits evaluate their 
    inputs and how one switches from the current solution to a better neighbor. 
    This is used in the proof of \cref{theorem-lex-4-sat-pls-complete} as well as in \cite{SchederTantow2025, DBLP:journals/tcs/DumraufM13, DBLP:conf/stoc/FabrikantPT04, DBLP:journals/siamcomp/SchafferY91, DBLP:conf/stoc/SkopalikV08, DBLP:conf/icalp/FotakisKLMPS20}. These proofs are often quite long and complicated. 
    
    \lexmaxksat{4} is a special case of \flip{} that is still \PLS{}-complete and combines the simple cost function of \flip{} with the simple combinatorial structure of CNF formulas. We argue that
    for reductions showing PLS-completeness, \lexmaxksat{4} can often
    be a better starting point than \flip{}, leading to simpler proofs and stronger results. 
    We demonstrate this in two cases: 

    \begin{theorem}
    \label{theorem-abelian-polm}
    \textup{\textsc{Permutation Orbit Local Minimum}} is \PLS-complete even when restricted to permutations 
    $\pi_1,\dots,\pi_k$ that generate an Abelian group. 
    \end{theorem}

    We prove \cref{theorem-abelian-polm} in Section~\ref{section-abelian-polm}; in fact, we prove that it stays \PLS-complete 
    even if we require that all permutations in 
    $\group{\pi_1,\dots,\pi_k}$ have order 2. In \cref{section-cyclic-polm} we prove that it stays \PLS-complete even 
    if we require the group $\group{\pi_1,\dots,\pi_k}$ to be cyclic. This answers an open question and strengthens the result from \cite{SchederTantow2025} as Abelian and cyclic groups allow less freedom for the permutations and also admit a potentially faster brute-force algorithm as $\pi \circ \sigma = \sigma \circ \pi$.\\
    
    Our final application of Theorem~\ref{theorem-lex-4-sat-pls-complete} comes from algorithmic game theory:

    \begin{theorem}
        \label{theorem-game}
        Finding an $\alpha$-equilibrium for $\alpha > 1$ in a congestion game is PLS-complete when all delay 
        functions are step functions or all are exponential functions or all are polynomials with positive coefficients.
    \end{theorem}    
    Skopalik and Vöcking have shown that finding an $\alpha$-equilibrium in congestion games 
    with step functions is \PLS-complete\cite{DBLP:conf/stoc/SkopalikV08}. Using Theorem~\ref{theorem-lex-4-sat-pls-complete}, we can give a
    very short and simple proof of this fact. In the appendix of the work of Caragiannis, Fanelli, Gravin and Skopalik \cite{DBLP:conf/focs/CaragiannisFGS11} this was also shown for polynomials with positive coefficients. Nonetheless, the question of the complexity of finding approximate Nash equilibria with positive coefficient polynomials was raised in \cite{DBLP:conf/icalp/FotakisKLMPS20,DBLP:conf/wine/VijayalakshmiS20, waldmann2022congestion}.
    
    Using \lexmaxksat{4} we get an extremely simple reduction for both cases that is also very regular and easy to analyze. In the light of this hardness result it is natural to ask whether there are useful parameters of the game that allow us to solve the problem efficiently when these parameters are small.
    
    In all reductions the maximum number of players per resource is bounded (3 in \cite{DBLP:conf/stoc/SkopalikV08}, 2 in \cite{DBLP:conf/focs/CaragiannisFGS11} and 4 in our work).
    Another interesting parameter is the maximum number of strategies per player. In \cite{DBLP:conf/focs/CaragiannisFGS11,DBLP:conf/stoc/SkopalikV08} this grows linearly with the input due to the $\textnormal{check}^j_{i,b}$ strategies of the $Y_j$ players. Our reduction
    shows that the problem is hard even when each player has only two strategies.\footnote{Players with one strategy can be removed by changing the payoff functions, so this is the minimum number of strategies players can have in any (non-trivial) game.}
    We will prove Theorem~\ref{theorem-game} in Section~\ref{section-congestion-games}.

    All reductions in this work are tight, i.e. the standard algorithm that starts from a solution and moves to a better neighbor takes for any problem in the worst-case exponentially many steps and deciding if a solution is the result of the standard algorithm is \PSPACE{}-complete. (see \cref{section-tightness}).

\section{Computing approximate Nash equilibria in congestion games - proof of Theorem \ref{theorem-game}}
    \label{section-congestion-games}
     In a congestion game, there is a set $N = \{1, \dots, n\}$ of players and a set $E$ of resources. Each player $i \in N$ has 
     a set of {\em strategies} $S_i \subseteq 2^E$. Each resource $e \in E$ has a {\em delay} $d_e : \mathbb{N} \to \mathbb{N}$.
    A {\em strategy profile} is a vector $\mathbf{s} = (s_1, s_2, \dots, s_n)$ with $s_i \in S_i$. The {\em load} of a resource $e$ 
    under profile $\mathbf{s}$ is the number of players using it: 
    $l(e,\mathbf{s}) := |\{i \in N \mid e \in s_i\}|$. This load causes a  delay of $d_e (l(e,\mathbf{s}))$. The cost of such a strategy profile $\mathbf{s}$ for a player $i$ is the sum of the delays of each resource they use, i.e. $C_i(\mathbf{s}) = \sum_{e \in s_i} d_e(l(e,\mathbf{s}))$. Players act selfishly and switch their strategy if it reduces the cost. A pure Nash equilibrium is a strategy profile $\mathbf{s}=(s_1,\dots,s_n)$ where no player has 
    an incentive to unilaterally change their strategy, i.e., if 
    \begin{align*}
    \forall i \in N\  \forall s_i' \in S_i: 
    C_i(\mathbf{s}) \leq C_i(s_1,\dots,s_{i-1}, s_i', s_{i+1}, \dots, s_n)  \ . 
    \end{align*}
    For such games, a pure Nash equilibrium always exists, as shown by Rosenthal~\cite{rosenthal1973class}. 
    Fabrikant, Papadimitriou, and Talwar~\cite{DBLP:conf/stoc/FabrikantPT04} show that finding
    a pure Nash equilibrium is \PLS-complete by giving a simple reduction from 
    local NAE-$3$-SAT.
    When we assume that the players are lazy and only change their strategy if their cost decreases by a significant amount, we get the concept of an 
    {\em approximate} pure Nash equilibrium. More formally, for a real $\alpha \geq 1$, an $\alpha$-equilibrium is a strategy profile $\mathbf{s}= (s_1,\dots,s_n)$ such that 
    \begin{align*}
    \forall i \in N\  \forall s' \in S_i: 
    C_i(\mathbf{s}) \leq \alpha \cdot C_i(s_1,\dots,s_{i-1}, s_i', s_{i+1}, \dots, s_n)  \ . 
    \end{align*}
    Skopalik and Vöcking~\cite{DBLP:conf/stoc/SkopalikV08} show that this is \PLS-complete for any computable 
    $\alpha > 1$ using an involved reduction from \flip{} and even more extended in \cite{DBLP:conf/focs/CaragiannisFGS11} for polynomials with positive coefficients.

    It is not possible to reuse the general reduction from $3$-SAT to show hardness of approximation. Consider the clauses $A = \neg y$, $C = x$, $C' = \neg x$ and $D = x \lor y$ with the weights $w(A) = 8$ , $w(C) = w(C') = 2$ and $w(D) = 1$. The assignment $x, y \mapsto 0,0$ has a weight of $w(C) + w(D) = 3$, whereas flipping the value of x leads to an assignment of weight $w'(C) = 2$. Thus, this is an improving move, but not if it should improve by an $\alpha \geq 2$ and similar examples exist for larger $\alpha$. 
    These problems don't occur with lexicographic weights, and using a reduction 
    from $\lexmaxksat{4}$, we can give a very short proof.

    \begin{proposition}[Skopalik and Vöcking~\cite{DBLP:conf/stoc/SkopalikV08}]
        For any computable $\alpha > 1$ and any congestion game $G$, finding an $\alpha$-equilibrium is \PLS-complete.
    \end{proposition}
    \begin{proof}
        We reduce from \lexmaxksat{$4$}. Let $F = C_0 \wedge C_1 \wedge \dots \wedge C_{m-1}$
        be a weighted 4-CNF formula with $w(C_j) = 2^{j}$.
        For an assignment $\alpha$, define
        $C(\alpha) := (C_{m-1}(\alpha), \dots, C_0(\alpha)) \in \{0,1\}^m$. The total weight of satisfied clauses 
        is thus simply $C(\alpha)$, interpreted as a binary number. Thus, $\Value(\alpha) \leq \Value(\beta)$
        if and only if $C(\alpha) \preceq_{\rm lex} C(\beta)$, i.e., only the first differing bit matters. This means 
        we can change the clause weights from $2^{j}$ to $c^{j}$ for any $c \geq 2$ and all 
        the answers to $\Value(\alpha) \stackrel{?}{\leq} \Value(\beta)$ will stay the same.
        We choose $c = \alpha + 1$.
        
        A solution to \lexmaxksat{$4$} is an assignment that locally maximizes the weight 
        of satisfied clauses. In the following proof it might be better to think about 
        {\em minimizing} the weight of {\em unsatisfied} clauses (which is of course the same).
        
        We will now define the congestion game. 
        The resources are the clauses $C_0, \dots, C_{m-1}$. The players are the variables 
        $\{x_1,\dots,x_n\}$. Player $x_i$
        has two strategies:
        \begin{align*}
            \textnormal{true}_i & := \{C \in F \mid \neg x \in C\} \\ 
            \textnormal{false}_i & := \{C \in F \mid x \in C\} . 
        \end{align*}        
        Note that strategy profiles in the game are in a 1-to-1 correspondence 
        with the Boolean assignments to the variables of $F$. A player (i.e., variable $x$) 
        uses the resources (i.e., clauses) that are {\em not} satisfied by the current assignment
        and contain the variable $x$.
        The delay $d_C$ of clause/resource $C_j$ is defined to be $0$ if 
        strictly fewer than $k$ players use it (where $k \leq 4$ is the size of $C_j$), 
        and $w(C_j) = (1+\alpha)^j$ if $k$ players do (note that $C_j$ 
        can never be used by more than $k$ players). The more valuable clauses thus cause a higher delay for the player then less valuable clauses. Apart from the value 
        $1 + \alpha$, this is roughly as in~\cite{DBLP:conf/stoc/FabrikantPT04}.

        We must now show that a pure $\alpha$-Nash equilibrium 
        corresponds to a local minimum of \lexmaxksat{4}.
        Contrapositively, we show that a variable flip that improves 
        the weight of satisfied clauses corresponds to a player who will 
        change their strategy. 
        Consider an improving flip of variable $x_i$ (from False to True, without 
        loss of generality). There is an index $j$ such that $C_j$ used to be unsatisfied 
        before the flip and is now satisfied, and the clauses $C_{j+1}, \dots, C_m$ do 
        not change their status. 
        The delay for player $x_i$ before the flip is therefore at least   
        \begin{align}
        \label{cost-before-flip}
            (1 + \alpha)^{j} \ . 
        \end{align}        
        Since the flip does not change the status of
        $C_{j+1}, \dots, C_m$, the unsatisfied among them contain neither $x$ nor $\bar{x}$ 
        and thus do not contribute to the delay for player $x$.
        Thus, their delay after the flip is at most 
        \begin{align}
        \label{cost-after-flip}
         \sum_{l\le j-1}(1 + \alpha)^l = \frac{(1 + \alpha)^{j}-1}{\alpha} \ .    
        \end{align}
        Thus, changing the strategy reduces the delay of player $x_i$ at least by 
        a factor of $\alpha$, so this is not an $\alpha$-Nash equilibrium.
    \end{proof}





    In fact, we can replace the step functions with continuous functions. For this, we add dummy players with only one strategy consisting of occupying a single resource, so that any resource can be occupied by 4 players (effectively this corresponds to making all clauses have 
    length exactly $4$). We define a common function $ f(x)$ and set the delay 
    of the resource $C_j$ under load $x \in \{0,1,2,3,4\}$ 
    to $d_{C_j}(x) = f(mx+j)$. Note that $j < m$, since $m$ is the number of clauses.
    We choose $f$ such that 
    \begin{align}
        f(x+1) \ge (1+\alpha) f(x) 
        \label{f-large-gap}
    \end{align}
    and $f(x) \ge 0$ hold for any $x \in [0, 5m-1]$. The term in (\ref{cost-before-flip}) is then $f(x)$ for some $x < 5m$ and the term in \cref{cost-after-flip} is then $\sum_{k < x} f(k)$ as flipping $x_i$ may lead to all smaller clauses being false (that means 4 players using the clause resource) and in higher-priority clauses that contain $x_i$, the number of unsatisfied literals 
    may increase to some $l \leq 3$.
    This is a gross overestimate.  From 
    (\ref{f-large-gap}) it follows that $f(k-i) \leq (1+\alpha)^{-i}f(x)$ and thus 
    (\ref{cost-after-flip}) is at most 
    \begin{align*}
        \sum_{k < x} f(k) & = \sum_{i=1}^x f(x-i) \leq f(x) \sum_{i=1}^x (1+\alpha)^{-i} 
        \leq f(x) \sum_{i=1}^\infty (1+\alpha)^{-i} = \frac{f(x)}{\alpha} \ .
    \end{align*}
    Therefore this is an $\alpha$-improving move.
    Now setting $f(x) := (1+\alpha)^x$ obviously fulfills~(\ref{f-large-gap}); so does $f(x) := x^d$ 
    with $d \ge 5 m \ln(\alpha+1)$ since
    $(\frac{x+1}{x})^d \geq e^{\frac{d}{x+1}} \geq e^{d/5m} \geq \alpha+1$.

    \section{Permutation Orbit Local Minimum with Abelian groups - Proof of Theorem ~\ref{theorem-abelian-polm}}
    \label{section-abelian-polm}

    We will show a \PLS{}-reduction from \lexmaxksat{4} to Permutation Orbit Local Minimum where all the $\pi_i$ commute. We use a technique by Buchheim and Jünger~\cite{DBLP:journals/disopt/BuchheimJ05}.
    Let the 4-CNF $F = C_1 \wedge \dots \wedge C_m$ be the input instance and 
      $V = \{x_1, \dots, x_n\}$  its variables. Clause $C_i$ has weight $2^{m-i}$.
    We assume for simplicity that 
    each clause in $F$ has exactly four literals; the reduction for the 
    general case is almost identical, but notation would become less readable. We will now describe 
    the instance of \polm{} - set of positions, initial bit string,  the commuting permutations, and, finally,
    the order on the positions.
    
    \begin{enumerate}
    \item \textbf{The set of positions.} 
    For each 4-clause $C$ and each $b \in \{0,1\}^4$ we construct a position $C_b$. 
    There is exactly one assignment $b^{*} \in \{0,1\}^4$ to the four variables 
    in $C$ that violates it; the corresponding position $C_{b^*}$ is called 
    a {\em violating position} and denoted by $C^*$.
    We have a 
    total of $16m$ positions; among those, $m$ are violating. 
    \item \textbf{The initial bit string.}
    Our initial bit string $\vec{s}$ puts a $1$ on $C_{\vec{0}}$ for each clause and a $0$ on 
    all other $C_b$. This corresponds to the assignment that sets 
    all variables to $0$.

    \item \textbf{The permutations.}
    For a variable $x$ and a clause $C$, let $\pi_{x,C}$ be the 
    permutation defined as follows:
    \begin{enumerate}
    \item If $x$ does not show up in $C$, then $\pi_{x,C}$ is the identity.
    \item If $x$ is the $i$-th variable of $C$ (with $1 \leq i \leq 4$), then 
    $\pi_{x,C}$ is the involution that swaps each position $C_b$ with 
    position $C_{b \oplus e_i}$. 
    \end{enumerate}
    So $\pi_{x,C}$ basically switches the value of $x$ {\em in clause $C$}. 
    We define $\pi_{x}$ to be 
    \begin{align*}
        \pi_x := \prod_{C \in F} \pi_{x, C}
    \end{align*}
    so $\pi_x$ switches the value of $x$ in every clause. 
    Our set of generators is 
    \begin{align*}
    G := \{\pi_x \ | \ x \in V\} 
    \end{align*}
    \item \textbf{The ordering of the positions.}
    First come 
    all the violating positions $C^*_1, C^*_{2}, \dots, C^*_m$, using the same 
    priority order as the clauses (with $C_1$ having the highest priority).
    Then come the $15m$ non-violating positions, in some arbitrary order. 
    \end{enumerate}

    Consider a string $\vec{v}$  in the orbit, meaning 
     $\vec{v} = \vec{s} \circ \sigma$ for some $\sigma \in \group{G}$. 
     Then for each clause $C$, exactly one of the sixteen positions 
     $C_b$ has a $1$ in $\vec{v}$; this corresponds to some 
     truth assignment to the variables in $C$. Those assignments agree globally and define 
     an assignment $\alpha_v: V \rightarrow \{0,1\}$. 
    
    \begin{proposition}
    Let $1 \leq j \leq m$ and let $\vec{v} $ be in the orbit. 
    Then $v_j$ is $1$ if and only if the corresponding assignment 
    $\alpha_v$ violates $C_j$.
    \end{proposition}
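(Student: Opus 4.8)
The plan is to unwind the definitions of the permutations $\pi_x$ and of the ordering on positions, leaning on the two observations stated just before the proposition. First I would pin down the structure of $\group{G}$. Each $\pi_{x,C}$ is an involution, and for a fixed clause $C$ the permutations $\pi_{x,C}$ (over the variables $x$ occurring in $C$) act on the $16$-element block $\{C_b : b \in \{0,1\}^4\}$ by $C_b \mapsto C_{b \oplus e_i}$, so they pairwise commute; permutations belonging to different clauses act on disjoint blocks and hence commute as well. Consequently $\group{G}$ is Abelian, and since each $\pi_x$ has order two, every $\sigma \in \group{G}$ can be written as $\sigma = \prod_{x \in S} \pi_x$ for some subset $S \subseteq V$.

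Next I would track where the single $1$ of $\vec{s}$ lands within each block. Every $\pi_x$ maps the block of $C$ to itself, so $\vec{v} = \vec{s} \circ \sigma$ still has exactly one $1$ in each block (this is the first observation). To locate it, note that applying $\pi_x$ to a string flips, in the block of each clause $C$ having $x$ as its $i$-th variable, the $i$-th coordinate of the index at which that block currently carries its $1$. Starting from $C_{\vec{0}}$ and applying $\prod_{x \in S} \pi_x$, the $1$ in block $C$ ends up at position $C_b$ where $b_i = 1$ exactly when the $i$-th variable of $C$ lies in $S$. Thus the partial assignment read off from block $C$ is ``the variable $x$ of $C$ is true iff $x \in S$'', which is the same assignment for every clause; this is the second observation, and it defines $\alpha_v$ via $\alpha_v(x) = 1 \iff x \in S$.

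With this in hand the proposition is immediate. By definition $C^*_j = C_{j,b^*}$ where $b^* \in \{0,1\}^4$ is the unique assignment of the four variables of $C_j$ that violates $C_j$, and $v_j$ is the value of $\vec{v}$ at position $C^*_j$ (since in the chosen ordering the $j$-th position, for $1 \le j \le m$, is $C^*_j$). By the previous paragraph this equals $1$ iff the single $1$ in the block of $C_j$ sits at $C_{j,b^*}$, i.e. iff the partial assignment read off from that block equals $b^*$, i.e. iff $\alpha_v$ restricted to the variables of $C_j$ is the violating assignment --- equivalently, iff $\alpha_v$ violates $C_j$.

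I do not expect a serious obstacle; the only point needing care is the bookkeeping in the second paragraph (being consistent about whether $\vec{v} = \vec{s} \circ \sigma$ means $v_p = s_{\sigma(p)}$ or $v_p = s_{\sigma^{-1}(p)}$), but since $\group{G}$ is generated by involutions this ambiguity does not affect the final coordinate formula. The genuinely substantial work of the section is not this proposition but the subsequent argument that a lexicographic local minimum of $\vec{v}$ under $G$ forces $\alpha_v$ to be a local optimum of the underlying $4$-SAT/1-FLIP instance.
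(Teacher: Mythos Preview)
Your proposal is correct and follows exactly the line the paper intends: the paper does not give a standalone proof of this proposition at all but simply records it as an immediate observation after the two enumerated facts about orbit elements (one $1$ per clause block, and those $1$'s defining a consistent global assignment $\alpha_v$), together with the definition of $C^*_j$ and the position ordering. Your write-up just fills in the routine verification of those two facts and the final unwinding, which is precisely what the paper leaves implicit.
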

    
    Thus, if $\alpha_v$ can be improved by flipping some variable $x$,
    then $\vec{v} \circ \pi_x$ decreases the string lexicographically;
    Thus, a locally minimal string in the orbit corresponds to 
    an assignment that cannot be improved by 1-flips i.e., 
    a local optimum for \lexmaxksat{4}.
    This concludes the proof of \cref{theorem-abelian-polm}.
    
\section{Max 4-SAT with lexicographic costs - proof of Theorem~\ref{theorem-lex-4-sat-pls-complete}}\label{section-lex-4-sat}

This is the main technical contribution of the paper. We will proceed in three steps. 
\begin{enumerate}
\item In Section~\ref{subsection-almost-lex}, 
we present a proof by Krentel\cite{DBLP:conf/coco/Krentel89,DBLP:conf/focs/Krentel89} that \localmaxksat{$4$}
is PLS-complete; we will make sure that the clause weights are ``almost lexicographic'', meaning that 
each clause weight is a power of two, and each weight appears for at most two clauses. 
\item In Section~\ref{subsection-lex-double-flip} we modify this construction to achieve fully lexicographic 
weights; this requires a new idea and comes at the cost that we need to allow ``double flips'' for PLS-completeness, 
that is, the neighborhood relation is defined by flipping up to two variables.
\item In Section~\ref{subsection-lex-single-flip} we show how to simulate double flips by a sequence of 
single flips. This will then prove Theorem~\ref{theorem-lex-4-sat-pls-complete}.
\end{enumerate}

\subsection{Krentel's Construction: Reducing \flip{} to \localmaxksat{4} With Almost Lexicographic Clause Weights}\label{subsection-almost-lex}
Let $C'$ be an instance of \flip{}: a Boolean circuit with $n$ inputs and $m$ outputs. We can transform this into a new circuit $C$ with $n$ inputs and $m+n$ outputs that also computes the best neighbor of the current input.
We can assume without loss of generality that $C$ uses only NAND-gates. We create two copies 
of $C$, called TOP and BOTTOM. For every input $1 \leq i \leq n$ of $C$ we create two 
variables $x_i^T$ and $x_i^B$, giving a total of $2n$ input variables. For every gate $j$ of $C$ we create 
variables $g_j^T$ and $g_j^B$. Of those, $2(m+n)$ correspond to the outputs of TOP and BOTTOM. 
It is advantageous to give them synonyms: we denote the $m+n$ output variables of TOP 
$v^T_1,\dots, v^T_m, z^T_1, \dots, z^T_n$ and those of BOTTOM 
$v^B_1,\dots, v^B_m, z^B_1, \dots, z^B_n$. Altogether, if $C'$ has $s$ NAND-gates,  
we get $2s$ gate variables, thus a total of $2(n+s)$ variables.\\

\textbf{Feeding output back to the inputs.} The variables represent a partial evaluation of the two circuits. 
NAND-gates can temporarily be incorrect, for example if $g_j^T$ is a gate variable 
with inputs $y^T_1$ and $y^T_2$ (those being either gate variables or input variables), but 
all three variables are set to $1$, because the correct output of the gate would be $0$. 
Also, we connect the last $n$ output bits (the ``successor output'') of TOP to the input of BOTTOM 
and vice versa. The intuition is that we can 
(1) flip gate variables in TOP to make every gate in TOP correct;
(2) flip input variables in BOTTOM to make them agree with TOP's output; 
(3) flip gate variables in BOTTOM 
to make all gates there correct; 
(4) flip input variables in TOP to make them agree with BOTTOM's output.
By that time most likely some gates of TOP will be incorrect again (by us having meddled with its input variables)
and so we go back to step (1) and repeat. By creating the right clauses and choosing appropriate weights we can make sure 
that each flip increases the weight of the satisfied clauses as long as the value outputs of the circuits increase. \\

\textbf{The active and the inactive circuit.} To make this idea work, we need an additional variable $f$, 
the flip variable, which indicates which circuit we consider to be {\em active}: TOP (if $f=1$) or BOTTOM (if $f=0$). 
The idea is now that gates in the active circuit should be correctly evaluated at all times and 
that $f$ should be flipped only when both circuits are correctly evaluated and the currently inactive circuit 
has better output than the other: if $(v^A_1 \dots v^A_m) >_{\rm lex} (v^I_1 \dots, v^I_m)$ where 
$A \in \{\textnormal{T}, \textnormal{B}\}$ denotes the active circuit and $I$ the inactive circuit.
See Figure~\ref{figure-two-circuits} for an illustration.\\
\begin{figure}
\centering
\includegraphics[width=0.6\textwidth]{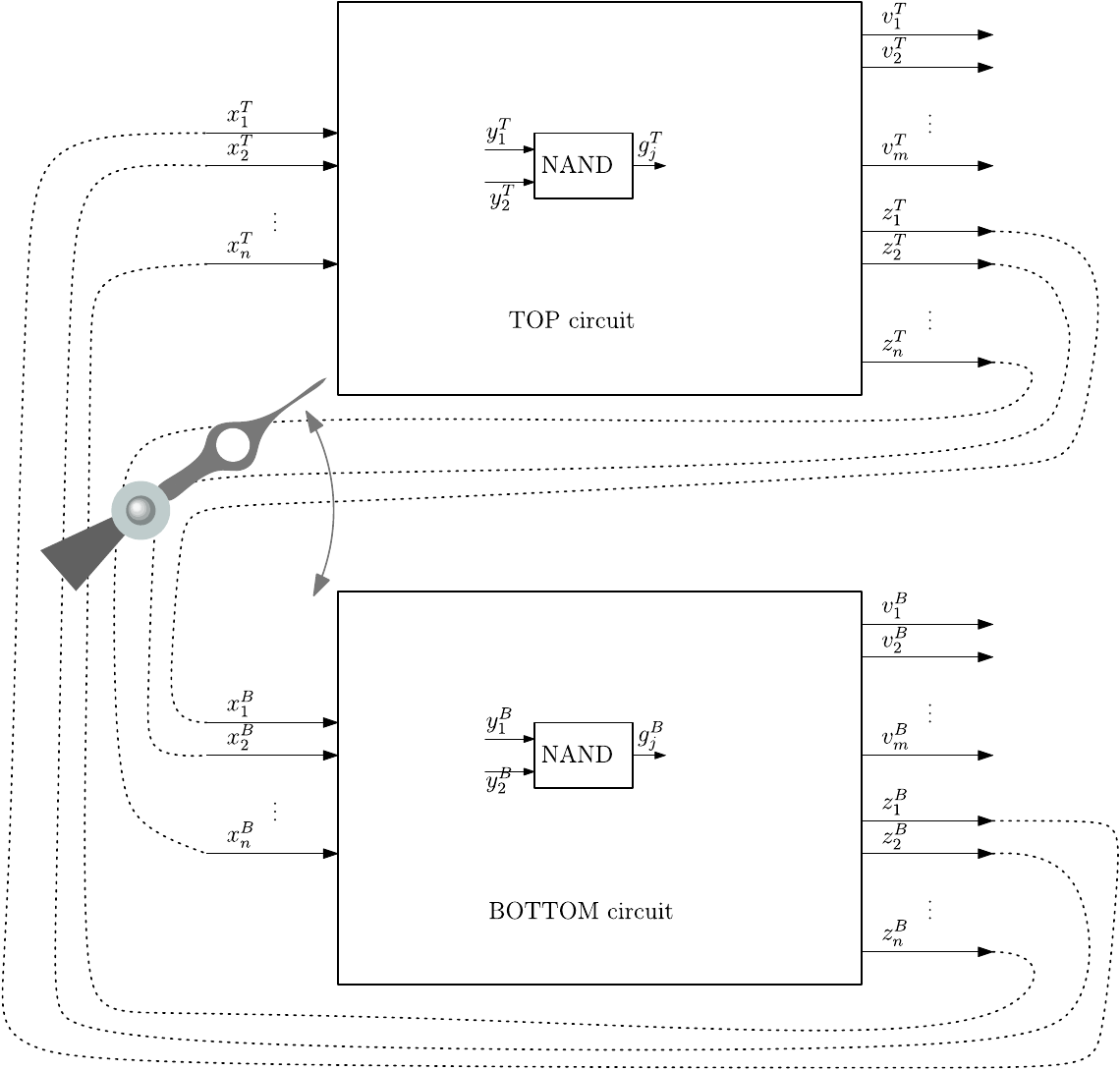}
\caption{The basic layout of Krentel's reduction\cite{DBLP:conf/coco/Krentel89,DBLP:conf/focs/Krentel89}. The big needle represents 
the flip variable $f$. Right now, TOP is active.}
\label{figure-two-circuits}
\end{figure}

\textbf{The clauses.} We will now describe in detail the clauses of our $4$-CNF formula $F$. 
    They come in groups, which we list from most-important to least-important. 

    \begin{enumerate}
    \item \label{point-active-circuit-correct} \textbf{Every gate in the active circuit should be correct.} When a gate $g_j$ has 
    inputs $y_1$ and $y_2$, we write 
    \begin{align}
        f \rightarrow (g_j^T \leftrightarrow \NAND(y_1^T, y_2^T))
        \label{top-active-correct}
    \end{align}
    and express this as a conjunction of four 4-clauses. Similarly, for the case that 
    BOTTOM is active:
    \begin{align}
    \neg f \rightarrow (g_j^B \leftrightarrow \NAND(y_1^B, y_2^B))
    \label{bottom-active-correct}
    \end{align}
    Within this group, the ordering of the clauses follow the topological ordering 
    of the gates (when $g$ feeds into $h$, then $g$ comes before $h$ and the 
    clauses thusly produced come first, too. Within the eight clauses of 
    (\ref{top-active-correct}-\ref{bottom-active-correct}) for a single gate, the order 
    is arbitrary.
    \item \label{point-active-output-high} \textbf{The value output of the active circuit should be large.} 
    For each value output position  $1 \leq i \leq m$ we create two 2-clauses:
    $(f \rightarrow v^T_i)$ and $(\neg f \rightarrow v^B_i)$. 
    The clauses are ordered from $i=1$ (most important) to $i=m$ (least important), 
    and for the same $i$, the two clauses have equal weight (here we lose perfect lexicographicality).
    
    \item \label{point-feedback-clauses} \textbf{Successor output of active circuit should be input of inactive circuit.}
    We write this as 
    \begin{align}
    \label{eqn-active-output-is-inactive-input}
    f \rightarrow (z_j^T = x_j^B)
    \textnormal{ and }
    \neg f \rightarrow (z_j^B = x_j^T) \  . 
    \end{align}
    The ordering within this group is arbitrary.
    \item \label{point-inactive-correct} \textbf{All gates should be correctly evaluated,} whether in the active or inactive circuit.
    \begin{align}
    g_j^T \leftrightarrow \NAND(y_1^T, y_2^T) \text{\ and\ }
    g_j^B \leftrightarrow \NAND(y_1^B, y_2^B) 
    \end{align}
    \end{enumerate}

    We call clauses in Group~\ref{point-active-circuit-correct} {\em hard clauses} 
    because they would always stay satisfied on an improving path, and if 
    violated can be easily corrected.
    It is straightforward to check that in a local optimum of this \localmaxksat{4}-instance, 
    all gates are evaluated correctly and the output of the active circuit agrees with the input 
    of the inactive. Thus, it holds that $(\mathbf{v}^A, \mathbf{z}^A) = C(\mathbf{x}^A)$ and
    $(\mathbf{v}^I, \mathbf{z}^I) = C(\mathbf{x}^I) = C(\mathbf{z}^A)$.
    If $\mathbf{v}^I >_{lex} \mathbf{v}^A$, then flipping the value of $f$ would 
    (i) keep clauses in Group~\ref{point-active-circuit-correct} satisfied;
    (ii) improve the weight of satisfied clauses in Group~\ref{point-active-output-high}, 
    and (iii) possibly violate clauses in  Group~\ref{point-feedback-clauses}. 
    By the chosen weights, this would be an improving flip. Thus, if we have a local optimum of 
    this \localmaxksat{4}-instance, then $\mathbf{v}^I \leq_{lex} \mathbf{v}^A$ and 
    $\mathbf{x}^A$ is indeed a solution of the original \flip{}
    instance.

\subsection{Making Weights Fully Lexicographic By Allowing Double Flips}
\label{subsection-lex-double-flip}

The trouble lies in Group~\ref{point-active-output-high}. If we naively decree that 
$(f \rightarrow v_i^T)$ be more valuable than (actually, twice as) $(\neg f \rightarrow v_i^B)$, 
then the reduction would become incorrect: if $\mathbf{v}^T = (01\dots)$ and $\mathbf{v}^B = (00\dots)$
and BOTTOM is active, then flipping $f$ from $0$ to $1$ would improve the output but violate $(f \rightarrow v_i^T)$
instead of $(\neg f \rightarrow v_i^B)$, thus reducing the weight of satisfied clauses. We are stuck in a local 
optimum that does not correspond to a local optimum of \flip{}. \\

To solve this problem, we create a {\em global output register} $\mathbf{o}$ consisting of $m$ variables $o_1, \dots, o_m$. 
The idea is now that (1) $\mathbf{o}$ should be large, but (2) not larger than the active output. When switching 
active and inactive circuits, we can flip $f$ {\em and simultaneously} the improving position of $\mathbf{o}$,
improving (1) but temporarily violating (2). To make this work we need to choose the order of the 
clauses in Group~\ref{point-active-output-high} carefully.

\begin{enumerate}
\setcounter{enumi}{1}
\item \textbf{The global output register should be large, but not larger than the active output.} 
For each $1 \leq i \leq m$, we replace the two old clauses $(f \rightarrow v^T_i)$ and $(\neg f \rightarrow v^B_i)$ by
\begin{align}
       \textnormal{the $\nth{i}$ TOP control clause } C_i^T :=\ & (f \rightarrow (o_i \leq v_i^T)) \label{control-top} \\
        \textnormal{the $\nth{i}$  BOTTOM control clause } C_i^B :=\ & (\neg f \rightarrow (o_i \leq v_i^B)) \label{control-bottom} \\
        \textnormal{the $\nth{i}$  output clause } C_i^O :=\  & (o_i) \label{payoff}
    \end{align}
\end{enumerate}
The $m$ blocks are ordered from $1$ (most important) to $m$ (least important) and within 
each block, the three clauses\footnote{$C_i^T$ and $C_i^B$ are indeed clauses: $(\neg f \vee \neg o_i \vee v_i^T)$
and $(f \vee \neg o_i \vee v_i^B)$.}
are ordered as shown above. \\

Now if we have a local optimum of this \lexmaxksat{4} instance, then by the same reasoning as in 
Section~\ref{subsection-almost-lex}, all circuits are correctly evaluated and BOTTOM's input is TOP's output and 
thus $(\mathbf{v}^A, \mathbf{z}^A) = C(\mathbf{x}^A)$ and
$(\mathbf{v}^I, \mathbf{z}^I) = C(\mathbf{x}^I) = C(\mathbf{z}^A)$.
Additionally, it holds that $\mathbf{o} = \mathbf{v}^A$. 
If $\mathbf{v}^I \leq_{lex} \mathbf{v}^A$ we have found a solution to the original \flip{} instance.
Otherwise, $\mathbf{v}^I >_{lex} \mathbf{v}^A = \mathbf{o}$. Let $i$ be the first position where they differ, so $v^I_i = 1$ and $v^A_i = 0$ (and thus $o_i =0)$. 
Now flipping the value $f$ and simultaneously 
setting $o^i$ to $1$ keeps Group~\ref{point-active-circuit-correct} satisfied; 
the clauses in blocks $1, \dots \leq i-1$ of Group~\ref{point-active-output-high} do not change; 
in block $i$ we gain $(o_i)$ and now satisfy all three clauses;
we might wreak havoc in blocks $i+1, \dots, m$ and Group~\ref{point-feedback-clauses}, 
but gaining $(o_i)$ outweighs all this. Thus, 
if we are not at a local optimum for \flip{}, we aren't at one for \lexmaxksat{4}, either, 
and the reduction is correct.

\subsection{Replacing double flips with single flips}\label{subsection-lex-single-flip}

In the previous section, we need to flip two variables to simulate a flip in the circuit instance. Flipping only $o_i$ from $0$ to $1$ will violate the $\nth{i}$ control clause 
of the active circuit, and flipping  $f$ alone does not increase and likely 
reduces the payoff. We introduce new variables that we call {\em shadow variables}, denoted by $s_i^B$ and $s_i^T$ for every output bit. Their purpose is to temporarily fulfill a control clause so that we can flip $o_i$. Then we can flip $f$ and continue as above.

We need to introduce two new groups of hard clauses, 0 and 1b, the latter located between group 1 and 2; modify group 2; and add groups 5 and 6. Groups 1, 3, and 4 are 
as above.

\begin{itemize}
    \item[\textbf{0.}] \textbf{Both inputs are equal or neighbors of the flip instance.}
    These hard clauses prevent assignments where the inputs of both circuits have a Hamming distance greater than one. They are not strictly needed for \PLS{}-completeness 
    but ensure {\em tightness} of the reduction (see \cref{section-tightness}).
    \begin{equation}
    \label{eq:Hamming-distance-at-most-one}
        \forall i\ne j: \ x^T_i \ne x_i^B  \to x_j^T = x_j^B
    \end{equation}
    The ordering among Group 0 clauses is arbitrary.

    \item[\textbf{1b.}]
        \textbf{Proper use of shadow variables.}
        These hard clauses have the purpose to disallow using shadow variables if there is no flip to be done. All clauses in this group can be satisfied by setting a shadow variable to false.
    
        Shadow variables 
        should be turned on only when the corresponding circuit is correctly 
        evaluated and, in case it is the inactive circuit, it is reading 
        its correct input, namely the best-neighbor output of the active circuit.
        Thus, for every output position $1 \leq i \leq m$ and every 
        input position $1 \leq j \leq n$ we produce
        \begin{align}
        \label{eq:shadow-variable-correct-input}
        (s^B_i \wedge f) \rightarrow (z_j^T = x_j^B) \text{\ and \ }
        (s^T_i \wedge \neg f) \rightarrow (z_j^B = x_j^T) 
        \end{align}
        saying that a ``shadowed inactive'' circuit must read the 
        correct input. Second, we require a shadowed circuit to be 
        correct in all gates. Thus, if a gate $g_j$ has inputs 
        $y_1$ and $y_2$, we produce
        \begin{align}
            \label{eq:shadow-variable-correct-circuit}
            &s^T_i \rightarrow (g_j^T \leftrightarrow \NAND(y_1^T, y_2^T)) \text{\ and \ }
            s^B_i \rightarrow (g_j^B \leftrightarrow \NAND(y_1^B, y_2^B))
        \end{align}
        Lastly, shadow variables  must not be used in vain---meaning 
        they should be used only when they indicate a position where 
        the inactive output is better than the active one. 
        Why is this necessary? When $i$ is the most significant bit in 
        which the active output is 0 but the inactive is 1 (so $i$ is where 
        we can improve things), we can be sure that all shadow variables 
        of the positions $1, ..., i-1$ are off, which will be important later.
        \begin{align}
            \label{eq-shadow-variable-necessary}
            s^B_i \to v_i^B \text{\ and \ }
            s^B_i \to \neg v_i^T \\
            s^T_i \to v_i^T \text{\ and \ }
            s^T_i \to \neg v_i^B 
        \end{align}
        The order among the clauses in this group is arbitrary. 
        All clauses so far (correctness clauses and proper use clauses)     
        stay fulfilled along the canonical path and are never violated when only doing improving steps. If violated, we can always satisfy them by correcting the input of the inactive circuit (group 0), correcting the output of gates (group 1), 
        or by deactivating shadow variables (group 1b).

    \item[\textbf{2.}] \textbf{The global output register should be large, but not larger than the active output.}\\
    The first three clauses are almost exactly the same as in the previous section, except that control clauses can also be fulfilled by the shadow variables of the other circuit.
    The last clauses are used to switch the active circuit after the global output was improved via a shadow variable.
    \begin{align}
        C_i^T :=\ &f \rightarrow (o_i \le v_i^T) \lor s^B_i \label{eq:control-clause-top}\\
        C_i^B :=\ &\neg f \rightarrow (o_i \le v_i^B) \lor s^T_i \label{eq:control-clause-bottom}\\
        C_i^O :=\ &o_i \label{eq:output-clause}\\
        &(o_i \land s_i^B) \rightarrow \neg f
        \label{eq:shadowActSwitch}\\
        &(o_i \land s_i^T) \rightarrow f \label{eq:shadowActSwitch2} 
    \end{align}
    
    \item[\textbf{5.}] \textbf{Use shadow variables at improvable positions.}
    If the inactive circuit is better at a position $i$ than the active circuit, these clauses will be false unless the shadow variable is turned on. Shadow variables of the active circuit have, however, no incentive to be switched on from here. The order is arbitrary in this group.
    \begin{align}\label{eq-active-shadow-variable}
        (f \land v_i^B \land \neg v_i^T) \rightarrow s_i^B \text{\ and \ } (\neg f \land v_i^T \land \neg v_i^B ) \rightarrow s_i^T
    \end{align}
    \item[\textbf{6.}] \textbf{Turn off shadow variables}\\
    The final set of clauses allows deactivating any shadow variable. The order is arbitrary in this group.
    \begin{align}       
        \neg s_i^B \text{\ and\ }  \neg s_i^T\label{eq:negshadow}
    \end{align}
\end{itemize}

Previously, we flipped both $f$ and a global output $o_i$ in order to gain an improvement. The new idea is to first set the shadow 
variable $s_i^B$ to $1$ (without loss of generality BOTTOM is inactive), for which 
clause (\ref{eq-active-shadow-variable}) gives us an incentive. Now the control clause 
$f \rightarrow (o_i \le v_i^T) \lor s^B_i$ (\ref{eq:control-clause-top}) is additionally satisfied 
by $s^B_i$, which enables us to flip the global output $o_i$ to $1$; finally this leads to flipping $f$. As a last step we can turn all shadow variables off again in the now active circuit BOTTOM using group 6. \PLS-completeness now follows from the 
following lemma, which we prove in  \cref{section-4sat-appendix}:

\begin{lemma}
  Let $\alpha$ be a locally optimal assignment to the 4-CNF just constructed and consider 
  $\mathbf{x}^A$, the input variables of the active circuit. Then $\alpha(\mathbf{x}^A)$ is a local 
  optimum of the \flip{} instance $C$. 
\end{lemma}

\section{Conclusion and open problems}

We added two members to the club of \PLS{}-complete problems with lexicographic costs: \lexmaxksat{4} and Abelian \polm{}.
We believe that as a point for further reductions, \lexmaxksat{4} is better than 
\flip{} due to its simple structure while preserving the lexicographical costs. We 
demonstrated this by showing that finding approximate Nash equilibria is \PLS{}-complete,
and the simple structure of lexicographic costs allowed us to use nice continuous delay functions. Additionally, all reductions in this work are tight (see \cref{section-tightness}).

Multiple open problems remain: While \lexmaxksat{$k$} is \PLS{}-complete
for $k=4$ and in \P{} for $k=2$, the case $k=3$ is open. 
Our reduction cannot easily be changed to produce $3$-clauses:
we already need 3-clauses to 
model the correct evaluation of a circuit (2-clauses don't suffice because 
2-SAT is in \NL{} and circuit evaluation is \P{}-complete); and then we need 
an additional variable $f$ to temporarily de-activate one copy of the circuit. 
It is also open whether \localmaxksat{4} stays \PLS{}-complete if every variable
appears a bounded number of times.

Secondly, in our reduction and the reduction in \cite{SchederTantow2025} for \polm{} the number of permutations grows with the input size. What happens if we only have a fixed number of permutations $k$? For $k=1$ this is in P~\cite{DBLP:conf/sat/KolodziejczykT24,SchederTantow2025} but even for $k=2$ it is open.

Thirdly, what happens if the delay functions in the congestion game are polynomials 
of bounded degree $d$ (and perhaps weighted)? While upper bounds are known, there are no lower bounds known for those $\alpha$ where the problem is in \PLS\cite{DBLP:journals/teco/CaragiannisFGS15, DBLP:journals/mor/ChristodoulouGGPW23, DBLP:journals/mor/GiannakopoulosN22}. In \cite{DBLP:journals/mor/ChristodoulouGGPW23} it is shown that for $\alpha \in \mathcal{O}(\frac{d}{\ln d})$ deciding the existence of an approximate Nash-equilibrium is \NP{}-hard if the game is weighted.  \lexmaxksat{4} might be a helpful intermediate problem to reach this target. 

Lastly, another interesting \PLS{} problem with lexicographic costs might be the traveling salesman problem as here it is also not feasible to check if there exists a solution using a set of edges. Known reductions\cite{DBLP:conf/icalp/Heimann0H24, DBLP:journals/siamcomp/SchafferY91} start however from Max-Cut and are therefore not adaptable for lexicographic costs.

\bibliography{main}

\appendix

\section{Remaining proofs for Max-4-SAT}\label{section-4sat-appendix}
    In this section, we give the remaining proofs for \cref{subsection-lex-single-flip}.
    The reduction given above transforms a circuit $C$ (an instance of \flip{}) 
    into a 4-CNF-formula $F = F(C)$ (an instance of \localmaxksat{4}).
    One step was to augment the circuit $C$ to not only output the $m$-bit 
    vector representing the {\em value} of its input $\mathbf{x}$, but also 
    the best neighbor of $\mathbf{x}$. We denote by $C_V(\mathbf{x}) \in \{0,1\}^m$ the 
    value and by $C_N(\mathbf{x})$ the best neighbor. 
    Let $\alpha$ be a truth assignment to the variables of $F$ and $f$ the 
    {\em flip variable}. Recall that we call the TOP circuit {\em active} if 
    $\alpha(f) = 1$; we call BOTTOM active if $\alpha(f) = 0$.
    We call the other circuit {\em inactive}.
    We define a function $g$ that maps assignments $\alpha$ of $F$
    to circuits input for $C$ as follows:

    \begin{align*}
        g(\alpha) = \begin{cases}
            & \alpha(x^B) \text{\ ,\ if $\alpha(f) = 0$ }\\
            & \alpha(x^T) \text{\ ,\ if $\alpha(f) = 1$ }\\
        \end{cases}
    \end{align*}

    \begin{theorem}
    \label{theorem-4-sat-appendix}
    Let $\alpha$ be a locally optimal assignment to the 
    variables of $F$. Then $g(\alpha)$ is a solution to the flip instance $C$.
    \end{theorem}
    \begin{proof}
        We will assume without loss of generality 
        that $\alpha(f)=1$, i.e., the TOP circuit is active.

    \begin{lemma}\label{lem-4sat-shadowvar-active-false}
        $\alpha$ maps all shadow variables of the active circuit TOP to $0$.
    \end{lemma}
    \begin{proof}
        Suppose, for the sake of contradiction, that  
        $\alpha(s_i^T)=1$ for some $1 \leq i \leq m$. What happens 
        when we change $\alpha(s_i^T)$ to $0$?
        Clause groups 0, 1, 3 and 4 do not contain shadow variables.
        Clauses of groups 1b and 6 contain only the negative literal 
        $\neg s^T_i$, so the change can only improve. 
        In group 2, 
        $s_i^T$ occurs in 
        $C_i^B :=\ \neg f \rightarrow (o_i \le v_i^B) \lor s^T_i$
        (\ref{eq:control-clause-bottom})
        and 
        $(o_i \land s_i^T) \rightarrow f$ (\ref{eq:shadowActSwitch2}), and in group 5
        only in $(\neg f \land v_i^T \land \neg v_i^B ) \rightarrow s_i^T$ 
        (\ref{eq-active-shadow-variable}). All three clauses are and stay satisfied by 
        $\alpha(f)=1$.
         Group 6 contains $(\neg s_i^T)$ (\ref{eq:negshadow}), so the satisfied weight 
        improves, meaning that $\alpha$ was no local optimum.        
    \end{proof}

    \begin{lemma}\label{lem-4sat-active-circuit-correct}
        In $\alpha$, all gate variables in the active circuit are correct, i.e. their output matches what an actual {\rm NAND} would do for the current values of the inputs.
    \end{lemma}
    \begin{proof}
        Suppose not, so there is a gate variable $g_j^T \ne {\rm NAND}(y^T_{1}, y^T_{2})$,
        where $y^T_{1}, y^T_{2}$ are the variables feeding into gate $g_j$ (either 
        earlier gates or input variables). Changing the value of $g_j^T$ leaves 
        clauses in group 0 unaffected; in group 1 we satisfy an additional 
        clause of 
        $f \rightarrow (g_j^T \leftrightarrow \NAND(y_1^T, y_2^T))$ (\ref{top-active-correct}) 
        belonging to gate $j$ in TOP while not violating any gate coming before $j$; we might violate lower-priority clauses.
        The total satisfied weight improves and $\alpha$ was not a local optimum.        
    \end{proof}

    \begin{lemma}\label{lem-4sat-if-shadow-then-correct}
        Suppose $\alpha(s^B_i)=1$ for some shadow variable $s_i^B$. Then all 
        gate variables in {\rm BOTTOM} are correct.
    \end{lemma}
    \begin{proof}
        Otherwise, if some gate variable $g_j^B$ were incorrect, 
        we could simply change $\alpha(s^B_i)$ to $0$; this 
        would not affect groups 0 and 1; in group 1b the currently 
        violated clause $s^B_i \rightarrow (g_j^B \leftrightarrow \NAND(y_1^B, y_2^B)$ 
        (\ref{eq:shadow-variable-correct-circuit}) 
        would become satisfied, and no 
        additional clause in group 1b would become violated, so this 
        would improve the satisfied weight, and $\alpha$ would not be a local 
        optimum.
    \end{proof}

    \begin{lemma}\label{lem-4sat-circuit-correct}
        In $\alpha$, all gate variables in the inactive circuit are correct.
    \end{lemma}
    
    \begin{proof}   
        If some $\alpha(s_i^B)=1$ this follows from Lemma~\ref{lem-4sat-if-shadow-then-correct}. So we can assume (*) that $\alpha(s_i^B)=0$ for all 
        shadow variables. For the sake of contradiction, suppose some 
        gate variable $g_j^B$ is incorrect:
        $g^B_j \ne {\rm NAND} (y^B_{1}, y^B_{2})$. What happens when we flip 
        $g^B_j$?
        Clauses in group 0 are unaffected; clauses in group 1 concerning the bottom circuit 
        are of the form $\neg f \rightarrow (g_{j'}^B \leftrightarrow \NAND(y_3^B, y_4^B))$
        (\ref{bottom-active-correct}) and thus are and stay satisfied by $\alpha(f)=1$.
        Group 1b stays satisfied because all 
        shadow variables are $0$ (in TOP by Lemma~\ref{lem-4sat-shadowvar-active-false}; 
        in BOTTOM by assumption (*)). 
  If $g^B_j$ is a value output variable $v_i^B$ then group 2 contains
  $\neg f \rightarrow (o_i \le v_i^B) \lor s^T_i$ (\ref{control-bottom});
  if $g^B_j$ is a best-neighbor output variable $z_{j'}^B$ then group 3 contains
  clause $\neg f \rightarrow (z_{j'}^B = x_{j'}^T)$
  (\ref{eqn-active-output-is-inactive-input}); both are and stay 
  satisfied by $\alpha(f)=1$. 
   In group 4, changing the value of $g_j^B$ 
   will satisfy all of $g_j^B \leftrightarrow \NAND(y_1^B, y_2^B)$, of which 
   at least one was previously violated, and violate only lower-priority clauses, 
   due to the clauses being ranked by the  topological order on the gates. 
   Thus, we improve the total 
   weight, meaning that $\alpha$ was not a local optimum.
    \end{proof}

    Let us summarize: all shadow variables in TOP are $0$, and all gates 
    are correctly evaluated.
    Formally: $\alpha(\mathbf{z}^T) = C_N(\alpha(\mathbf{x}^T))$;
    $\alpha(\mathbf{v}^T) = C_V(\alpha(\mathbf{x}^T))$;
    $\alpha(\mathbf{z}^B) = C_N(\alpha(\mathbf{x}^B))$;
    $\alpha(\mathbf{v}^B) = C_V(\alpha(\mathbf{x}^B))$.
    Next, we show that BOTTOM's input is equal to TOP's output.

    \begin{lemma}\label{lem-4sat-if-shadow-then-input-correct}
        If $\alpha(s^B_i)=1$ then $\alpha(\mathbf{x}^B) = \alpha(\mathbf{z}^T)$.
    \end{lemma}
    \begin{proof}
        If not, then $\alpha(x^B_j) \ne \alpha(z^T_j)$ for some $1 \leq j \leq n$.
        Then clause $(s^B_i \wedge f) \rightarrow (z_j^T = x_j^B)$ (\ref{eq:shadow-variable-correct-input}) 
        from group 1b
        is violated. If we change $s^B_i$ to $0$ we satisfy it and do not violate 
        any other clause of group 1b since $s^B_i$ only occurs negatively here; 
        group 0 does not contain $s^B_i$ at all. 
    \end{proof}

    \begin{lemma}\label{lem-4sat-input-correct}
        $\alpha(\mathbf{x}^B) = \alpha(\mathbf{z}^T)$.
    \end{lemma}
    \begin{proof}
        If some $s^B_i$ is set to one, this follows from 
        \cref{lem-4sat-if-shadow-then-input-correct}. So we can assume 
        that all shadow variables are $0$. For the sake of contradiction, 
        suppose there is some $1 \leq j \leq n$ with 
        $\alpha(z_j^T) \ne \alpha(x_j^B)$. We distinguish three cases.\\

        \textbf{Case 0: } $\alpha(\mathbf{x}^B)$ and $\alpha(\mathbf{x}^T)$ are equal.
        Let us change the value of $x_j^B$. Group 0 stays satisfied because now exactly one 
        input bit differs; group 1 stays satisfied because $\alpha(f)=1$; 
        group 1b stays satisfied because all shadow variables are $0$.
        Group 2 does not contain input variables $x_j^B$.\footnote{We assume that 
        every input of the circuit passes through at least one gate, so no input 
        variable doubles as an output variable.}
        Group 3 contains
        $f \rightarrow (z_j^T = x_j^B)$ (\ref{eqn-active-output-is-inactive-input}), 
        which are actually two clauses; $\alpha$ violates exactly one of them, 
        and changing $x^B_j$ to $0$ satisfies both. Other clauses in group 3
        do not contain $x_j^B$. This is an improving step.
    
        \textbf{Case 1: } There is exactly one $1 \leq i \leq n$ with 
        $\alpha(x_{i}^T) \ne \alpha(x_{i}^B)$. 
        As argued above, $\alpha(\mathbf{z}^T)$ really is the correct 
        output $C_N(\alpha(\mathbf{x}^T))$. By design, it's the best 
        neighbor of $\alpha(\mathbf{x}^T)$. So both 
        $\alpha(\mathbf{x}^B)$ and $\alpha(\mathbf{z}^T)$ have Hamming 
        distance $1$ to $\alpha(\mathbf{x}^T)$ (the former by assumption of Case 1;
        the latter by design). So there are actually exactly two positions 
        where $\alpha(\mathbf{z}^T)$ and $\alpha(\mathbf{x}^B)$ differ: $i$ and $j$.
        Changing $x_i^B$ instead of $x_j^B$ makes $\mathbf{x}^T$ and $\mathbf{x}^B$ equal, which keeps
        group 0 and 1 satisfied and does not 
        affect group 2; in group 3 it additionally satisfies 
        clauses $f \rightarrow (z_i^T = x_i^B)$ (\ref{eqn-active-output-is-inactive-input})
        (strictly speaking it satisfies both those clauses whereas one of them used to be violated).
        It might affect group 1b, but only in a beneficial way: 
        making $z_j^B$ and $x_j^T$ agree can only improve things here. \\
    
        \textbf{Case 2: } $\alpha(\mathbf{x}^B)$ and $\alpha(\mathbf{x}^T)$ differ         
        in more than one coordinate, say $i$ and $j$. Then a clause coming from 
        $x^T_i \ne x_i^B  \to x_j^T = x_j^B$ (\ref{eq:Hamming-distance-at-most-one}) in group 0 
        is violated and flipping the 
        violating input position $i$ surely improves the weight.
    \end{proof}
    
    We now know that under $\alpha$ the following hold:
    $\mathbf{z}^T$ equals $\mathbf{x}^B$, 
    and it is the best neighbor of $\mathbf{x}^T$. 
    $\mathbf{v}^T$ is the value of $\mathbf{x}^T$ 
    and $\mathbf{v}^B$ is the value of $\mathbf{x}^B$.
    If $\alpha(\mathbf{v}^T) \geq \alpha(\mathbf{v}^B)$, then
    $\alpha(\mathbf{x}^T)$ is at least as good as its best neighbor, is 
    a solution to \flip{}, and we are done. Otherwise, there exists
    $1 \leq i \leq n$ with $\alpha(v^T_i)=0$ and $\alpha(v^B_i)=1$ 
    and $\alpha(v^T_{j}) = \alpha(v^B_{j})$ for all $1 \leq j < i$. 
    We will derive a contradiction.
    
    \begin{lemma}
    \label{lemma-o-top-bottom-agree-to-the-left}
        Under $\alpha$ we have $o_j = v_j^T = v_j^B$ for all $1 \leq j < i$.
    \end{lemma}
    \begin{proof}
        We already know that $\alpha(s_j^T)=0$ by Lemma~\ref{lem-4sat-shadowvar-active-false}. We claim that $\alpha(s_j^B)=0$, too: If $\alpha(s_j^B)$ were $1$,
        then changing it to $0$ leaves groups 0 and 1 unaffected; in group 1b 
        clauses (\ref{eq:shadow-variable-correct-input}) and
        (\ref{eq:shadow-variable-correct-circuit}) are and stay satisfied;
        among (\ref{eq-shadow-variable-necessary}) exactly one is violated 
        under $\alpha(s_j^B)=1$ and both are satisfied 
        under $\alpha(s_j^B)=0$ (which one depends on whether $v_j^B, v_j^T$ are both 
        $1$ or both $0$). So changing $\alpha(s_j^B)$ to $0$ would improve the 
        satisfied weight. 
        Thus, both $s_j^B$ and $s_j^T$ are $0$ under $\alpha$. 
        
        Next we claim that $\alpha(o_j) = \alpha(v_j^T)$. Suppose not. 
        If $\alpha(o_j)=0$ and $\alpha(v_j^T)=1$ then changing $o_j$ to $1$ 
        wins us clause $(o_j)$ (\ref{eq:output-clause}); if 
        $\alpha(o_j)=1$ and $\alpha(v_j^T)=0$ then changing $o_j$ to $1$ 
        wins us $(f \rightarrow (o_j \leq v_j^T) \vee s_j^B)$ (\ref{eq:control-clause-top}).
        All other clauses in block $j$ of group 2 remain as they were; clauses 
        in other blocks or other groups do not even contain the variable $o_j$. 
    \end{proof}

    Let us summarize: all circuits are correctly evaluated and BOTTOM reads as 
    input the output of TOP. We have $\alpha(v_i^T)=0$ 
    and $\alpha(v_i^B)=1$; all TOP shadow variables $s^T_j$ are $0$; 
    for BOTTOM, the shadow variables $s^B_j$ for $1 \leq j \leq i-1$ are all $0$.    
        We consider four cases, depending on the values of $o_i$ and $s_i^B$:\\

        \textbf{Case 10: } $\alpha(o_i)=1$ and $\alpha(s_i^B)=0$. Then 
        the control clause $(f \rightarrow (o_i \le v_i^T) \lor s^B_i)$
        (\ref{eq:control-clause-top}) is violated; changing $o_i$ to $0$ 
        satisfies it and only loses the lesser clause $(o_i)$ (\ref{eq:output-clause}).\\

        \textbf{Case 00: } $\alpha(o_i)=0$ and $\alpha(s_i^B)=0$. 
        Then $(f \wedge v_i^B \wedge \neg v_i^T) \rightarrow s_i^B$
        (\ref{eq-active-shadow-variable}) is false and we can change $s_i^B$ to $1$ 
        and satisfy it. This does not change any other clause in group 5, 4, 3 and 1 since $s_i^B$ does not occur there. In group 1b all clauses stay fulfilled since by \cref{lem-4sat-circuit-correct} all gate variables are correct and the input of the inactive circuit BOTTOM matches the neighbor output of the active circuit TOP by \cref{lem-4sat-input-correct}. Also since $\alpha(v_i^B)=1$ and  $\alpha(v_i^T)=0$
        holds, the final type of clause in group 1b is fulfilled: 
        $s^B_i \to v_i^B$ and $s^B_i \to \neg v_i^T$~(\ref{eq-shadow-variable-necessary}).
        
        In group 2, $s_i^B$ occurs in 
        $f \rightarrow (o_i \le v_i^T) \lor s^B_i$ (\ref{eq:control-clause-top}) 
        and $(o_i \land s_i^B) \rightarrow \neg f$ (\ref{eq:shadowActSwitch}). 
        The former is satisfied with $\alpha(s^B_i)=1$, the latter 
        is and stays satisfied by $\alpha(o_i)=0$.\\

        \textbf{Case 01: } $\alpha(o_i)=0$ and $\alpha(s_i^B)=1$. Consider setting
        $o_i$ to $1$. This affects only block $i$ of group 2. 
        The control clause
        $f \rightarrow (o_i \le v_i^T) \lor s^B_i$ (\ref{eq:control-clause-top}) 
        is and stays satisfied by $\alpha(s^B_i)=1$; 
        the control clause 
        $\neg f \rightarrow (o_i \le v_i^B) \lor s^T_i$ (\ref{eq:control-clause-bottom})
        is and stays satisfied by $\alpha(f)=1$.  The third clause in block i, 
        $(o_i)$~(\ref{eq:output-clause}), becomes newly satisfied and thus 
        the overall satisfied weight improves.\\

        \textbf{Case 11: } $\alpha(o_i)=1$ and $\alpha(s_i^B)=1$. Then changing 
        $f$ to $0$ wins us the flip clause 
        $(o_i \wedge s_i^B) \rightarrow \neg f$ 
        of group 2~(\ref{eq:shadowActSwitch}). Clauses in earlier blocks $j < i$ of group 2 
        are not affected: control clauses like
        $\neg f \rightarrow (o_j \le v_j^B) \lor s^T_i$~(\ref{eq:control-clause-bottom}) stay satisfied since  
        $o_j$ and $v_j^B$ agree due to Lemma~\ref{lemma-o-top-bottom-agree-to-the-left};
        flip clauses $(o_j \land s_j^T) \rightarrow f$ (\ref{eq:shadowActSwitch2}) 
        stay satisfied because
        all shadow variables of TOP are $0$ by Lemma~\ref{lem-4sat-shadowvar-active-false}.
        Also, we know by \cref{lem-4sat-circuit-correct} that all gate clauses are correct, so  all clauses in group 1 stay correct, too.
        Finally $f$ does not occur in any clause of group 0 or 1b, so they stay satisfied.\\

    Note that this flip process would also start when $v_i^T > v_i^B$ 
    but  $v_k^T < v_k^B)$ for some $k > i$.  It would flip $s_k^B$ to $1$ (Case 00) and then $o_k$ (Case 01) but 
    would fail to flip $f$ (Case 11) since that would violate the control clause
    $\neg f \rightarrow (o_i \le v_i^B) \lor s^T_i$ (\ref{eq:control-clause-bottom}). 
    
    We conclude that $\alpha(\mathbf{x}^T)$ is indeed a solution to the 
    \flip{} instance $C$. This concludes the proof of Theorem~\ref{theorem-4-sat-appendix}.
    \end{proof}

\section{The Max-3-SAT/2-Flip problem}\label{section-lex3sat}
    There is currently a gap between the \PLS{}-completeness of \lexmaxksat{4} and the polynomial algorithm for \lexmaxksat{2}. We can only show hardness for \lexmaxksat{3} if we allow double flips. 
    
    \begin{theorem}\label{theorem-lex-3-sat-pls-complete}
        \lexmaxksat{3}/2-FLIP is \PLS{}-complete.
    \end{theorem}

    The reduction is almost identical to the one in \cref{subsection-lex-double-flip}, with one crucial modification: for each gate $g_j$ we 
    introduce not one but {\em three} variables $g_j, g_{j,1}, g_{j,2}$, and then 
    two copies of each corresponding to TOP and BOTTOM: 
    $g_j^T, g^T_{j,1}, g^T_{j,2}, g^B_j, g^B_{j,1}, g^B_{j,2}$. The idea is that $g_{j,1}$ and $g_{j,2}$ represent
    the two inputs of the gate and $g_j$ the output. We 
    introduce a ``Group 0'' of $3$-clauses with the highest priority that 
    state gates should be correct at all times:
    \begin{align*}
    g_j^T \leftrightarrow \NAND(g_{j,1}, g_{j,2}) \ . 
    \tag{Group 0}
    \end{align*}
    We modify the clauses in Group 1; we replace the 4-clauses 
    constructed in (\ref{top-active-correct}) and (\ref{bottom-active-correct}), 
    namely 
    \begin{align*}
    f \rightarrow (g_j^T \leftrightarrow \NAND(y_1^T, y_2^T)) \\        
    \neg f \rightarrow (g_j^B \leftrightarrow \NAND(y_1^B, y_2^B))
    \tag{The Group 1 in the previous reduction}
    \end{align*}
    by clauses requiring that the input of gate $g_j$ must agree with 
    the output of the gate feeding into it; in other words, if 
    the output of gate $g_{i}$ feeds into the first output of gate $g_j$, we 
    add the following $3$-clauses to Group 1:
    \begin{align*}
    f \rightarrow (g_i^T \leftrightarrow g_{j,1}^T) \\ 
    \neg f \rightarrow (g_i^B \leftrightarrow g_{j,1}^B) \tag{The new Group 1}
    \label{wire-clauses}
    \end{align*}
    and similarly for the second input $g_{j,2}$; we call those 
    clauses {\em wire clauses} because they check whether 
    the value in the circuit are consistent along the wires.
    Also, if the first input of the gate $g_j$ is an input $x_i$ to the circuit, 
    we produce the clause
    \begin{align*}
    f \rightarrow (x_i^T \leftrightarrow g_{j,1}^T) \\ 
    \neg f \rightarrow (x_i^B \leftrightarrow g_{j,1}^B) \tag{The new Group 1}
    \end{align*}
    These are also called wire clauses.\\

    Finally, we replace the lowest-priority group, Group 4, which previously stated
    that all gates should be correct, including those in the inactive circuit, 
    by wire clauses:
    \begin{align*}
        g_i^T \leftrightarrow g_{j,1}^T\\ 
        g_i^B \leftrightarrow g_{j,1}^B \tag{The new Group 4}
    \end{align*}
    
    \begin{lemma}\label{lemma-3sat-correct}
    Suppose $\alpha$ is a local maximum, i.e., cannot be improved by a 
    1-flip or 2-flip. Then all clauses in groups 0, 1, 3, and 4 are satisfied. 
    \end{lemma}

    \begin{proof}
    A clause in Group 0 can always be satisfied by changing $\alpha$ in one 
    position, namely the output of the incorrect gate. Now suppose 
    a clause in the new group 1 is violated, for concreteness 
    $(f \rightarrow (g_i^T \leftrightarrow g_{j,1}^T))$, so $f=1$ (TOP is active) 
    and $\alpha(g_i^T) \ne \alpha(g_{j,1}^T)$ (the wire from gate $g_i^T$ to 
    gate $g_j^T$ is inconsistent); we can satisfy this by flipping 
    $\alpha(g_{j,1}^T)$; this might violate gate $g_j^T$ and thus a 
    clause in Group 0, which has higher priority. However, 
    we are allowed to flip up to {\em two} variables: we flip 
    $g_{j,1}^T$ and, if need be, also $g_j^T$. This keeps Group 0 satisfied; 
    it might violate a wire clause for a wire leaving gate $j$; however,
    this is in Group 1 and has a lower priority due to the topological ordering
    of the circuit. 

    A clause in Group 3 requires that the output of the active 
    circuit  agree with the input of the inactive one; suppose one is violated, 
    for concreteness $(f \rightarrow (z_j^T \leftrightarrow x_j^B))$, 
    so $f=1$ (TOP is active). We will simply flip $x_j^B$. The 
    wires going from $x_j$ to the gates into which it feeds will now 
    be inconsistent; however, the corresponding wire clauses in Group 1 
    will be satisfied because $f=1$ (BOTTOM is inactive); some wire clauses 
    of Group 4 will become violated, but that's fine because they have 
    lower priority than the clause in Group 3. 

    Finally, if a clause in Group 4 is violated, it must be 
    in the inactive circuit (all wire clauses for the active circuit are 
    satisfied due to Group 1); for concreteness, BOTTOM is inactive 
    and $(g_i^B \leftrightarrow g_{j,1}^B)$ is violated. We flip 
    $g_{j,1}^B$ and, if need be, the output variable $g_j^B$. Clauses 
    in Group 0 are still satisfied because all gates are still correct; 
    clauses in Group 1 are satisfied because we only fiddled in BOTTOM, 
    which is inactive, so all its Group-1-clauses are satisfied anyway; 
    Group 3 stays satisfied because outputs of the inactive circuit BOTTOM are 
    not required to agree with the inputs of the active circuit TOP; 
    in Group 4 further wire clauses involving $g_{j^B}$ might become
    violated, but they have lower priority than the one we fixed.
    \end{proof}

    If Group 0, 1, 3, and 4 
    are satisfied, and the output of the inactive circuit is lexicographically 
    larger than the output of the active circuit, we can still flip $f$ 
    and one bit of $\vec{o}$ to improve $\alpha$, simply because the clauses
    of Group 2 are identical to the ones in the reduction to $4$-SAT/2-FLIP.
    This concludes the proof of \cref{theorem-lex-3-sat-pls-complete}

    Note that our reduction to $4$-SAT/2-FLIP uses 2-flips very sparingly, 
    only when we switch $f$ (switch which circuit is active). In the 
    reduction to $3$-SAT/2-FLIP we  use a 2-flip potentially every time we
    propagate a value through the circuit. 

    The shadow variables idea does not work here, because we cannot extend the clauses here as they are already 3-clauses.

\section{Cyclic \localmin}\label{section-cyclic-polm}

    We can strengthen the result from \cref{section-abelian-polm} to cyclic groups with a proof technique similar to the \NP{}-hardness proof for the global optima from \cite{SchederTantow2025}.

    \begin{theorem}\label{thm-cyclicpolm}
        \localmin{} is \PLS{}-complete even there exists a permutation $\sigma$ such that all permutations $\pi_i = \sigma^j$.
    \end{theorem}
    
    We reduce from $4$-SAT/1-FLIP with lexicographic payoffs, 
    which is \PLS-complete according to \cref{theorem-lex-4-sat-pls-complete}. Let $F$ be a
    $4$-CNF formula 
    $x_1, \dots, x_n$ its variables, and $C_1, \dots, C_m$ its clauses from 
    highest-priority to lowest-priority.

    Consider the first $n$ prime numbers $p_1 = 2, 3, 5, \dots, p_n$
    and set $N := p_1 \cdot p_2 \cdots p_n$. We let $\Z_k$ denote 
    the group $\left(\Z / k\Z, +\right)$ and also, by a slight abuse 
    of notation, the set
    $\{0, 1, \dots, k-1\}$. By the Chinese remainder theorem, 
    the function 
    \begin{align*}
        f: \Z_N & \rightarrow \Z_{p_1} \times \cdots \times \Z_{p_n} \\
        l   & \mapsto (l \textnormal{ mod } p_1, \dots, l \textnormal{ mod } p_n) 
    \end{align*}
    is bijective, and both the function and its inverse can be computed in polynomial 
    time. 

    \textbf{The positions.}\\
    Our positions will be arranged in {\em cycles}, one for each 
    variable and one for each clause. In each cycle, the positions are 
    numbered as $0, \dots, l-1$, where $l$ is the length of the cycle. 
    The initial bit string $s$ will place a $1$ on each $0$-position in each 
    cycle and $0$s elsewhere. Each permutation will rotate each 
    cycle by a given amount; thus, each element in the orbit of $s$ 
    will have exactly one $1$ on each cycle.

    \textbf{Variable positions.}\\
    For every variable $x_i$ we create a cycle of length $p_i$ 
    with positions $0, 1, \dots, p_i-1$. 
    An element in the orbit has one 1 per cycle and thus assigns each 
    variable $x_i$ 
    a value in $\{0, 1, \dots, p_i-1\}$. We call 
    the positions $2, \dots, p_i-1$ bad-color-positions and 
    make them very expensive (giving
    them the highest priority) to make sure that a local minimum only 
    assigns values $0$ and $1$ to the variables. Among bad-color-positions, 
    the order is arbitrary. 

    \textbf{Clause positions.}\\ Let $C$ be a clause of $F$ 
    and suppose it contains the variables $x_i, x_j, x_k$ and $x_l$. We create 
    a cycle of length $p_i p_j p_k p_l$. 
    Each $t \in \{0, \dots, p_ip_jp_kp_l-1\}$ becomes, when taking 
    remainders modulo $p_i, p_j, p_k, p_k$, a tuple $(r_i, r_j, r_k, r_l) \in 
    \Z_{p_i} \times \Z_{p_j}  \times \Z_{p_k} \times \Z_{p_l}$. Should this 
    triple be in $\{0,1\}^4$ and should it be the unique truth 
    assignment violating clause $C$, we call $t$ the {\em clause-violating position}
    on that cycle. 
    In our lexicographic ordering, we place the clause-violating 
    positions directly after the bad-color-positions; among themselves, we 
    order them as the ordering of the clauses of $F$ dictates: the 
    clause-violating position of clause $C_i$ has higher priority than 
    that of $C_{i+1}$. 

    \textbf{The permutations.}\\
    Let $\pi$ be the permutation that rotates each cycle by one (maps position $t$ on the cycle to position $t+1$, 
    and so on). For $m \in \N$ the permutation $\pi^m$ rotates the variable 
    cycle of $x_i$ by 
    $m$ mod $p_i$ and the clause cycle of a clause with variables 
    $x_i, x_j, x_k$ and $x_l$ by $m$ mod $p_i p_j p_k p_l$. Thus, the bit string 
    $v = s \circ \pi^l$ has a $1$ on a bad-color position if 
    $v$ does not represent a Boolean assignment to the variables; 
    if it does represent a Boolean assignment $\alpha$, 
    then $v$ has a $1$ on the clause-violating position of clause $C$ 
    if and only if $\alpha$ violates $C$. \\

    For each variable $x_i$ and each $r \in \Z_{p_i}$ let $m$ be such that 
    \begin{align*}
     m \equiv r \mod p_i \\
     m \equiv 0 \mod p_j \tag{for all $j \ne i$} 
    \end{align*}
    and set $\pi_{x_i, r} := \pi^m$.
    The idea is that the permutation $\pi_{x_i, r}$ changes the value of $x_i$ but 
    not of any other variable. The generators for our 
    \localmin{} instances are now all 
    $\pi_{x_i,r}$. 
    This gives a polynomially long list of permutations 
    $\pi^{e_1}, \pi^{e_2}, \dots, \pi^{e_M}$, all powers of a single permutation 
    $\pi$.
    \begin{lemma}
    Let $l$ be such that $s \circ \pi^l$ is a local minimum, i.e., 
    $s \circ \pi^{l} \circ \pi^{e_i} \geq_{\rm lex} s \circ \pi^{l}$ for 
    all $1 \leq i \leq M$. Then $s \circ \pi^{l}$ encodes 
    an assignment $\{x_1, \dots, x_n\} \rightarrow \{0,1\}$ that 
    is locally optimal for $F$ under 2-flips.
    \end{lemma}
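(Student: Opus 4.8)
The plan is to establish, just as in the Abelian reduction, a dictionary between elements of the orbit of $s$ and truth assignments of $F$, only now mediated by the Chinese remainder theorem. Write $v := s \circ \pi^{l}$. Since every element of the orbit carries exactly one $1$ on each cycle, $v$ assigns to each variable $x_i$ the value $a_i \in \Z_{p_i}$ recorded by the position of the $1$ on the cycle of $x_i$, and, for a clause $C$ on variables $x_i,x_j,x_k$, the $1$ on $C$'s cycle sits at the position that, by CRT, corresponds to the triple $(a_i,a_j,a_k)$. The proof splits into three steps: (i) a local minimum carries no $1$ on a bad-color position, hence encodes a genuine Boolean assignment $\alpha$; (ii) on the clause-violating positions, read in clause order, $v$ spells out exactly which clauses $\alpha$ violates; (iii) an improving $2$-flip for $F$ would be realized by one of the generators (or a product of two of them, which is still a power of $\pi$), contradicting local minimality.

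For step (i), suppose some $a_i \geq 2$, i.e. $v$ has a $1$ on a bad-color position of the cycle of $x_i$. Apply the generator $\pi_{x_i,r}$ that resets the value of $x_i$ to $0$ (such an $r$ exists because every residue in $\Z_{p_i}$ is available). By construction this permutation rotates only the cycle of $x_i$ and the clause cycles containing $x_i$; it touches no other variable cycle, hence no other bad-color position, and on the cycle of $x_i$ it moves the lone $1$ from the bad-color position $a_i$ to position $0$. The bad-color positions occupy the very top of the priority order, and $v$ and $v\circ\pi_{x_i,r}$ agree on every bad-color position except this one, where $v$ has a $1$ and $v\circ\pi_{x_i,r}$ has a $0$; hence $v\circ\pi_{x_i,r} <_{\rm lex} v$, contradicting that $v$ is a local minimum. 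So all $a_i \in \{0,1\}$, and $\alpha := (a_1,\dots,a_n)$ is a Boolean assignment.

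Step (ii) is then immediate from the CRT description: $v$ has a $1$ on the clause-violating position of $C$ iff $(\alpha(x_i),\alpha(x_j),\alpha(x_k))$ is the unique falsifying pattern of $C$, i.e. iff $\alpha$ violates $C$. Reading $v$ in priority order we therefore see the all-zero bad-color block followed by the violation vector of $\alpha$ ordered as $C_1,\dots,C_m$. For step (iii), suppose $\alpha$ is not $2$-flip-optimal: there is an assignment $\alpha'$ obtained by toggling at most two variables, $x_i$ (and possibly $x_j$), with strictly larger lexicographic payoff; because the clause weights are lexicographic this is equivalent to the violation vector of $\alpha'$ being lexicographically smaller than that of $\alpha$, which in turn means that at the first clause whose status changes, $\alpha$ violated it and $\alpha'$ satisfies it. Let $\sigma$ be the corresponding generator $\pi_{x_i,r}$ (resp. the product $\pi_{x_i,r}\circ\pi_{x_j,r'}$). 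Applying $\sigma$ moves the $1$s on the cycles of $x_i$ (and $x_j$) between positions $0$ and $1$ and updates exactly the clause cycles meeting $x_i$ or $x_j$, so $v\circ\sigma$ encodes the Boolean assignment $\alpha'$; by step (ii) its bad-color block is all zero and its clause-violating block is the violation vector of $\alpha'$. Comparing with $v$: they agree on the entire bad-color block, and at the first clause-violating position where they differ $v$ has a $1$ (clause violated by $\alpha$) and $v\circ\sigma$ a $0$ (satisfied by $\alpha'$). Hence $v\circ\sigma <_{\rm lex} v$, again contradicting local minimality, so $\alpha$ must be $2$-flip-optimal for $F$.

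The only genuine difficulty here is bookkeeping rather than conceptual: every ``fixing'' permutation must change $v$ only at coordinates less important than the one it is meant to improve. This is exactly what the priority order (bad-color positions first, then the clause-violating positions in clause order) together with the CRT design of the clause cycles buys us: a single generator $\pi_{x_i,r}$ alters the value of $x_i$ simultaneously and consistently in every clause containing $x_i$ and in no other variable cycle, so toggling one or two variables never disturbs a higher-priority coordinate other than the intended one. It is also worth noting that $\pi$ has order $N = p_1\cdots p_n$, so each generator $\pi^{e_t}$ is represented by $e_t \bmod N$, which has polynomially many bits, and membership of $\pi^{l}$ in $\group{\pi}$ is trivial to verify; together with $f$ and its inverse being polynomial-time computable, this makes the whole construction a bona fide \PLS{}-reduction.
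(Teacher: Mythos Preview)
Your steps (i), (ii), and the single-flip half of (iii) track the paper's proof essentially line for line: reset a bad-colored variable with the generator $\pi_{x_i,\,p_i-a_i}$, read off the violation vector via the CRT correspondence, and realize a single improving variable flip by the generator $\pi_{x_i,\pm 1}$. That part is correct and is exactly the paper's argument.

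The gap is in the $2$-flip case. You set $\sigma = \pi_{x_i,r}\circ\pi_{x_j,r'}$, observe that this is ``still a power of $\pi$'', and conclude from $v\circ\sigma <_{\rm lex} v$ that $v$ is not a local minimum. But the local-minimum hypothesis only guarantees $v\circ\pi^{e_t}\geq_{\rm lex} v$ for each $\pi^{e_t}$ \emph{in the given generator list}, and that list, as constructed in the section, consists solely of the single-variable permutations $\pi_{x_i,r}$. The product $\pi_{x_i,r}\circ\pi_{x_j,r'}$ is indeed some $\pi^{l}$, but it is not among the $\pi^{e_1},\dots,\pi^{e_M}$, so you may not invoke the hypothesis against it. In fairness to you, the paper is internally inconsistent at this point: the section announces a reduction from $4$-SAT/$1$-FLIP, then calls $F$ a $3$-CNF, the lemma says ``$2$-flips'', and the paper's own proof argues only the single-flip case and concludes that $v$ encodes ``a local maximum of the $4$-SAT/$1$-FLIP instance $F$''. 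So your $1$-flip argument is precisely what the paper actually proves; establishing the $2$-flip statement as literally written would require enlarging the generator list by the products $\pi_{x_i,r}\circ\pi_{x_j,r'}$ (as the paper does explicitly in the Abelian reduction), which the cyclic construction as given does not do.
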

    \begin{proof}
       Let $v := s \circ \pi^l$. Each variable cycle contains 
       exactly one position where $v$ has a 1. This assigns each $x_i$ a value 
        $b \in \{0,1, \dots, p_i-1\}$. Now if $b \not \in \{0,1\}$ then 
        $v$ has a 1 at position $b$ of this cycle---a bad-color-position. 
        Applying $\pi_{x_i, p_i-b}$ moves the 1 on this cycle to the $0$-position 
        but leaves all other variable cycles as they are. 
        We have moved a 1 away from a bad-color-position and thus 
        decreased the cost of $v$; hence $v$ is not a local minimum.

        Conversely, if $v$ is a local minimum, then $v$ has no $1$ at any 
        bad-color position and thus $v$ encodes, as described above, 
        a truth assignment $\alpha$ to the variables $x_1, \dots, x_n$. 
        Suppose $\alpha$ is not locally optimal for $F$ 
        but can be improved by flipping 
        a variable $x_i$. This flip can be realized 
        by the permutation $\pi_{x_i, r}$ where 
        $r$ is $1$ if we want to flip $x_i$ from $0$ to $1$ and 
        $p_i-1$ if we want to flip it from $1$ to $0$. Since flipping 
        $x_i$ is an improving step, 
        it additionally satisfies some clause $C$ and does not 
        additionally violate any higher priority clause $C'$; 
        thus, applying $\pi_{x_i, r}$ moves the 
        $1$ of $v$ away from the clause-violating position of $C$ but 
        places no new $1$ on any clause-violating position of any 
        higher-priority clause $C'$. Also, it does not move a 1 onto 
        any bad-color-position. Thus, $v$ is not a local minimum, either.

        We infer that every local minimum $v = s \circ \pi^l$ under
        the permutations $\pi^{e_1},\dots, \pi^{e_M}$ corresponds 
        to a local maximum of the $4$-SAT/1-FLIP instance $F$.
        This concludes the proof of \cref{thm-cyclicpolm}.
    \end{proof}

\section{Tightness}\label{section-tightness}

    The notion of {\em tightness} was introduced by Schäffer and Yannakakis in \cite{DBLP:journals/siamcomp/SchafferY91} to also reason about the complexity of the standard algorithm. The standard algorithm uses a starting solution and always follows the best neighbor until it finds a local optimum. Finding the solution of the standard algorithm for \flip{} is \PSPACE{}-complete~\cite{DBLP:conf/stoc/PapadimitriouSY90}. Tight reductions 
    preserve the \PSPACE{}-completeness of the standard algorithm.
    
    For this, we consider the transition graph $TG(I)$. Its vertices 
    are the solutions to instance $I$, and each feasible solution $x$ has 
    a directed edge to each of its improving neighbors in $N(x)$. A \PLS-reduction $(f,g)$ from $P$ to $Q$ is called \emph{tight} if for every instance $I$ of $P$ there exists a set $\mathcal{R}$ of feasible solutions for $f(I)$ such that
    \begin{enumerate}
        \item $\mathcal{R}$ contains all local optima of $f(I)$
        \item For every solution $s$ of $I$ it is possible to construct in polynomial time a feasible solution $t \in \mathcal{R}$ such that $g(I,t) = s$
        \item If the transition graph of $TG(f(I))$ contains a path from $q$ to $q'$ such that both $q$ and $q'$ are in $\mathcal{R}$ and all other intermediate nodes are not in $\mathcal{R}$, let $p = g(I, q)$ and $p' = g(I, q')$ be the corresponding solutions in $P$. Then either $p = p'$ or there is an arc from $p$ to $p'$ in $TG(I)$.
    \end{enumerate}

    We now show that the reductions are tight. We start with the reductions from \flip{} to \lexmaxksat{4}.
    This reduction uses many steps (often called a super step) to simulate a single step in the \flip{} instance. We have to make sure that there is no earlier switch in a super step. 
    
    \begin{proposition}\label{prop-tight-4-sat}
        The reduction from \flip{} to \lexmaxksat{4} in \cref{section-lex-4-sat} is tight.
    \end{proposition}
    \begin{proof}
        Let $C$ be the instance of \flip{} and $F$ be the formula obtained by our reduction. 
        We define $\mathcal{R}$ to be the set of assignments to $F$ where all clauses of group 0, 1 and 1b are fulfilled. 
        If this is true, then we know that the active circuit is correct and the Hamming distance between the inputs of both circuits is at most 1. By \cref{lem-4sat-circuit-correct} and \cref{lem-4sat-input-correct} we know that any local optimum is in $\mathcal{R}$. 
        This shows that Point 1 holds.
    
        We can take any solution $\mathbf{x}$ of \flip{}, set the input $\mathbf{x}^T$ of TOP  to $\mathbf{x}$ 
        and set all variables in the circuit such that all gates are correct and $\mathbf{z}^T = \mathbf{x}^B$ and $f=1$.
        The resulting assignment $\alpha$ satisfies $g(C,\alpha) = \mathbf{x}$. This shows that Point 2 holds.

        Let us now show Point 3. First observe that if $\alpha \in \mathcal{R}$ and there is an improving 
        step $(\alpha, \beta)$ in $TG(F)$, then $\beta$ also satisfies groups 0, 1b, and 1, since they 
        stay fulfilled along any improving path. So $\beta \in \mathcal{R}$, too. Now 
        let $q, q'\in \mathcal{R}$ as in Point 3: there is a path in $TG(F)$ from $q$ to $q'$ and no other vertex on the path 
        is in $\mathcal{R}$. We have seen that {\em all} vertices on an improving path starting in $\mathcal{R}$ are in $\mathcal{R}$, 
        so actually $(q,q')$ is an arc in $TG(F)$, so $q = \alpha$ and $q' = \beta$, an assignment obtained from $\alpha$ 
        by changing one variable. Without loss of generality, assume that TOP is active.
        
        There are two possibilities. If $\alpha$ and $\beta$ differ in a shadow variable, global output variable, 
        or a variable in the inactive circuit, then $\alpha$ and $\beta$ represent the same solution to $C$, namely 
        $\alpha(\mathbf{x}^T)$.
        The other possibility is that it was created by flipping $f$. This only occurs when the inactive circuit is correct 
        and its  output is better than the output of the active circuit. 
        But then $g(C, \beta) = \alpha(\mathbf{x}^B)$ is an improving \flip{}-neighbor of 
        $g(C,\alpha) = \alpha(\mathbf{x}^T)$, which means that this is an arc in $TG(I)$.
    \end{proof}

    In fact, the CNF formula resulting from our reduction can be designed to be 
    {\em sensitive}, meaning that for every assignment $\alpha$ and any variable $v$ 
    there exists a clause $C$ such that changing the value of $v$ changes 
    the truth value of $C$:
    $\alpha(C) \ne \beta(C)$ for $\beta := \alpha[v \mapsto \neg \alpha(v)]$.

    \begin{proposition}\label{prop-sensitive-4-sat}
        Sensitive \lexmaxksat{4} is \PLS{}-complete under a tight reduction.
    \end{proposition}

    \begin{proof}
        Note that our reduction from \cref{subsection-lex-single-flip} fulfills 
        the sensitivity condition almost all variables: the global output variables change clauses in group 2; shadow variables in group 6; gate variables in group 4; and input variables of the inactive circuit in group 3.

        The only remaining variables are $f$ and input variables in the active circuit. It might be that changing one input does not change the correctness of any clause
        stemming from a NAND-gate. In order to counter that, we 
        can route each input through an ``identity gate'' 
        behind each input (and simulate it by 
        two NAND-gates). This does not change the circuit semantics, but changing the input now either fulfills or violates a gate clause.

        For $f$ we can add a single clause that fixes $f$ to some value. This clause has the lowest weight of all. For the tightness this however this leads to problems as in the case of neighbors with equal outputs as the input of the top circuit is preferred to the input of the bottom circuit for equal outputs. We can however ensure that this scenario neither comes up by assuming that the circuit creates different outputs for different inputs e.g. by extending the output with the input in the lowest positions. 

        With these modifications, the proof works analogously to the proof of \cref{prop-tight-4-sat}.
    \end{proof}

    Reducing from sensitive \lexmaxksat{4} allows us now to make all remaining reductions 
    tight.
    
    \begin{proposition}
        The reduction from sensitive \lexmaxksat{4} to Abelian \polm{} and to finding approximate Nash equilibria is tight.
    \end{proposition}
    \begin{proof}
        We first see that the transition graphs of the instances obtained by the reduction contains the transition graph of \lexmaxksat{4}, since every improving flip of \lexmaxksat{4} is directly represented as a single flip in the new instances. Hence, we can take all feasible solutions as $\mathcal{R}$ and fulfill the first two points.

        For the third point, we have to make sure that there are no internal improving flips that change the assignment while the fulfilled clauses stay the same. 
        Because if there were, then this would not be an improving move in the \lexmaxksat{4} instance but might be in the newly created instances since it changes the delay on some more important resource from 3 to 2 while the delay on some less important resource increases from 1 to 2. Similarly, in the \polm{} instance this would mean that some very important fulfilling position by the order has a one and after applying it we map the one to some less important fulfilling position.

        By \cref{prop-sensitive-4-sat} we know however that these internal improving flips can never occur and any flip changes the truth value of at least some clause. Thus, any improving flip in the $f(I)$ instance here is an improving flip in the original
        instance $I$.
    \end{proof}

    For cyclic \polm{} from the previous section the same proof idea works, but as $\mathcal{R}$ we choose the permutations that define an assignment, i.e. there are no ones at bad-color positions.

\section{FP-completeness for \lexmaxksat{2}}

    In contrast to the decision variant 2-SAT, which is \NL{}-complete, 
    and the arbitrarily-weighted \localmaxksat{2}, which is \PLS{}-complete\cite{DBLP:journals/siamcomp/SchafferY91}, we show that 
    \localmaxksat{2} with lexicographic costs (i.e. \lexmaxksat{2}) is \FP{}-complete. 
    We have already seen a polynomial time algorithm in the introduction; 
    now we will show hardness.
    
    \begin{proposition}
        \lexmaxksat{2} is \FP-hard.
    \end{proposition}
    \begin{proof}
        We reduce from the circuit evaluation problem. We are given a circuit $C$ 
        and an input $\mathbf{x} \in \{0,1\}^n$. The gates of $C$ are given
        in topological order. The goal is to compute $C(\mathbf{x}) \in \{0,1\}^m$, 
        i.e., the output of the circuit.         
        We can assume that the circuit consists only of NAND-gates. We use a variable $x_i$ for each input and 
        a variable $g_j$ for each gate.
    
        The highest clauses fix the input variables to the current input. Afterwards, we use the following three clauses for each gate. 
        If gate $g_j$ has inputs $y_1$ and $y_2$, we produce the following 
        block containing three clauses:
        \begin{enumerate}
            \item $\neg y_1 \rightarrow g_j$
            \item $\neg y_2 \rightarrow g_j$
            \item $\neg g_j$
        \end{enumerate}
        
        The blocks are ordered according to the topological ordering of the 
        circuit; within each block they are ordered as shown above. \\

        We claim that in any local optimum $\alpha$, it holds that  $g_j = {\rm NAND}(y_1, y_2)$ holds. Suppose that this is not the case.\\
        
        \textbf{Case 1:} $\alpha(g_j)=1$ and  $\alpha(y_1) = \alpha(y_1) = 1$.
        Then the third clause above, $\neg g_j$, is violated, and setting $g_j$ to $0$ 
        will satisfy all three clauses above. \\

        \textbf{Case 0:} $\alpha(g_j)=0$ and at least one of $y_1, y_2$ is mapped to 
        $0$. Then one of the first two clauses must is violated; flipping $g$ 
        satisfies it; it will violate the third clause, but it is still an improving 
        flip.\\

        So we conclude that  under $\alpha$ every gate is correctly evaluated. 
        Therefore, the value of the $m$ output gate variables under $\alpha$ 
        give us the correct output of the 
        circuit: $C(\mathbf{x}$).
    \end{proof}
\end{document}